\title{An Optimal Construction for the Barthelmann-Schwentick Normal Form on Classes~of~Structures~of~Bounded Degree \\
	{\small Preliminary Version}}
\author{André Frochaux, Lucas Heimberg \\
Humboldt-Universität zu Berlin,\\ 
  \texttt{\{andre.frochaux,lucas.heimberg\}@informatik.hu-berlin.de} }
       \newtheorem{theorem}{Theorem}
        \newtheorem{lemma}[theorem]{Lemma}
        \theoremstyle{definition}
           \newtheorem{claim}{Claim}
        \newtheorem*{remark}{Remark}
\newcommand{\logic}[1]{\textup{\sf #1}}
\newcommand{\FO}{\logic{FO}}
\newcommand{\HNF}{\textsc{hnf}}
\newcommand{\GNF}{\textsc{gnf}}
\newcommand{\BSNF}{\textsc{bsnf}}
\newcommand{\ov}[1]{\overline{#1}}
\newcommand*{\und}{\ensuremath{\wedge}}
\newcommand*{\Und}{\ensuremath{\bigwedge}}
\newcommand*{\oder}{\ensuremath{\vee}}
\newcommand*{\Oder}{\ensuremath{\bigvee}}
\newcommand*{\impl}{\ensuremath{\rightarrow}}
\newcommand*{\gdw}{\ensuremath{\leftrightarrow}}
\newcommand{\NN}{\ensuremath{\mathbb{N}}}
\newcommand{\Npos}{\ensuremath{\mathbb{N}_{\geqslant 1}}}
\newcommand{\Rpos}{\ensuremath{\mathbb{R}_{\geqslant 1}}}
\newcommand{\sph}{\ensuremath{\text{type}}}
\newcommand{\qr}{\ensuremath{\text{qr}}}
\newcommand{\free}{\ensuremath{\text{free}}}
\newcommand{\N}{\ensuremath{\mathcal{N}}}
\newcommand{\A}{\ensuremath{\mathcal{A}}}
\newcommand{\B}{\ensuremath{\mathcal{B}}}
\newcommand{\C}{\ensuremath{\mathcal{C}}}
\newcommand{\T}{\ensuremath{\mathcal{T}}}
\newcommand{\TT}{\ensuremath{\mathfrak{T}}}
\newcommand{\F}{\ensuremath{\mathcal{F}}}
\newcommand{\FF}{\ensuremath{\mathfrak{F}}}
\newcommand{\CC}{\ensuremath{\mathfrak{C}}}
\newcommand{\DD}{\ensuremath{\mathfrak{D}}}
\newcommand{\size}[1]{\ensuremath{\left\lVert #1 \right\rVert}}
\newcommand{\set}[1]{\ensuremath{\{ #1 \}}}
\newcommand{\lin}[1]{\ensuremath{\mathcal{O}(#1)}}
\newcommand{\poly}[1]{\ensuremath{\text{poly}(#1)}}
\newcommand{\isdef}{\ensuremath{:=}}
\renewcommand{\phi}{\varphi}
\newcommand{\dist}{\text{dist}}
\newcommand{\Tower}{\text{Tower}}
\renewcommand{\S}{\ensuremath{\mathcal{S}}}
\DeclareMathOperator{\Root}{\ensuremath{\textnormal{root}}}
\DeclareMathOperator{\eq}{\ensuremath{\textnormal{eq}}}
\DeclareMathOperator{\iso}{\ensuremath{\textnormal{iso}}}
\DeclareMathOperator{\coreach}{\ensuremath{\textnormal{co-reach}}}
\DeclareMathOperator{\ar}{\ensuremath{\textnormal{ar}}}
\DeclareMathOperator{\bit}{\ensuremath{\textnormal{bit}}}
\newcommand{\enc}{\textnormal{enc}}
\newcommand{\sle}{\leqslant}
\renewenvironment{proof}[1][\proofname]{\par
  \pushQED{\qed}%
  \normalfont \topsep6\p@\@plus6\p@\relax
  \trivlist
  \item[\hskip\labelsep
        \color{darkgray}\sffamily\bfseries
    #1\@addpunct{.}]\ignorespaces
}{%
  \popQED\endtrivlist%\@endpefalse
}
\begin{document}

\maketitle

\begin{abstract}
  Building on the locality conditions for first-order logic by Hanf and Gaifman, Barthelmann and Schwentick showed in 1999 that every first-order formula is equivalent to a formula of the shape $\exists x_1 \dotsc \exists x_k \forall y\,\phi$ where quantification in $\phi$ is relativised to elements of distance $\leq r$ from $y$. Such a formula will be called Barthelmann-Schwentick normal form (\BSNF) in the following. However, although the proof is effective, it leads to a non-elementary blow-up of the \BSNF\ in terms of the size of the original formula.

  We show that, if equivalence on the class of all structures, or even only finite forests, is required, this non-elementary blow-up is indeed unavoidable. We then examine restricted classes of structures where more efficient algorithms are possible. In this direction, we show that on any class of structures of degree $\leq 2$, \BSNF\ can be computed in 2-fold exponential time with respect to the size of the input formula. And for any class of structures of degree $\leq d$ for some $d\geq 3$, this is possible in 3-fold exponential time. For both cases, we provide matching lower bounds.
\end{abstract}

\section{Introduction}

First-order logic (for short: $\FO$) and its extensions are employed in many fields of theoretical computer science, as for example automata theory, descriptive complexity theory, database theory, and algorithmic meta-theorems. 

However, it is well-known that the expressive power of $\FO$ is very limited: it can only express local properties. This excludes properties that require a global grasp of the structure, as for example graph connectivity. The theorems by Hanf, by Gaifman, and by Schwentick and Barthelmann \cite{Han65,FagSV95,Gaifman,SchB99} are formalisations of the locality of $\FO$ and thus facilitate inexpressibility proofs. Moreover, each of these locality theorems gives rise to a normal form for first-order logic.

In particular, Gaifman's theorem implies that on the class of \emph{all structures}, every sentence of $\FO$ is equivalent to a Gaifman normal form (\GNF), i.e., a Boolean combination of statements of the shape
\begin{quote}
  ``There are $\geq k$ nodes whose $r$-neighbourhoods are \emph{pairwise disjoint} and which satisfy the same \FO-definable property $\phi$.''
\end{quote}

Hanf's theorem implies that for every class of structures of \emph{bounded degree}, each sentence of $\FO$ logic is equivalent to a Hanf normal form (\HNF), i.e., a Boolean combination of statements of the shape
\begin{quote}
  ``There are $\geq k$ nodes whose $r$-neighbourhoods each have isomorphism type $\tau$.''
\end{quote}
Hanf's and Gaifman's theorem have found a plethora of applications in algorithms and complexity (cf., e.g., \cite{Seese,FrickGrohe01,Libkin-FMT,DurandGrandjean,KazanaSegoufin-boundedDegree,Kreutzer-AMT-Survey,DBLP:conf/stacs/Segoufin14,DBLP:conf/stoc/GroheKS14,berkholz_et_al:LIPIcs:2017:7053,KuskeSchweikardt2017}). In particular, algorithmic meta-theorems make use of the local conditions expressed in \GNF\ and \HNF\ to show that on many classes of structures, $\FO$ model checking is fixed-parameter tractable, and that the results of $\FO$ queries can be enumerated with constant delay after a linear-time preprocessing phase. 

Schwentick and Barthelmann \cite{SchB99} presented a local normal form for first-order logic that avoids the ``pairwise disjoint $r$-neighbourhoods'' constraint in Gaifman's normal form as well as the restriction to classes of structures of \emph{bounded degree} necessary for Hanf's normal form. They showed that on the class of  \emph{all structures}, every sentence of $\FO$ is equivalent to a \emph{single} statement of the shape
\begin{quote}
  ``$\geq k$ pebbles can be placed such that the $r$-neighbourhoods of all nodes in the so extended structure satisfy the same \FO-definable property $\phi$.''
\end{quote}
In the following, we call such statements \emph{Schwentick-Barthelmann normal form} ($\BSNF$).
In~\cite{SchB99}, two applications of \BSNF\ are described:  a local variant of Ehrenfeucht-Fra\"iss\'e games~\cite{Fra54,Ehr61} which restricts the game to local neighbourhoods after an initial phase \cite{SchB99}, and an automata model for first-order logic. 

In the context of algorithmic meta-theorems, the question about the efficiency of constructing normal forms has arisen (cf., e.g., \cite{BolligKuske2012,DGKS07,Lin08,HKS13-LICS,DBLP:journals/corr/HarwathHS15,HKS16-LICS,KuskeSchweikardt2017}). In particular, it was shown in \cite{DGKS07} that there is a non-elementary lower bound for the size of \GNF\ in respect to the input sentence if equivalence is required on the class of all finite trees. On the other hand, \cite{BolligKuske2012} and \cite{HKS13-LICS} provided 3-fold exponential algorithms and matching lower bounds for the construction of \HNF\ and \GNF, respectively, on classes of structures of bounded degree.

Concerning \BSNF, the construction described in \cite{SchB99} is effective, but has non-elementary time complexity. We show (cf. Theorem~\ref{Thm:lowerBound_non-elem}) that this is indeed unavoidable -- i.e., even when equivalence of the constructed \BSNF\ to the input sentence is  required on the class of all finite forests, a non-elementary blow-up in the size cannot be avoided.

For this reason, our main focus lies on an investigation of \BSNF\ on classes of structures of \emph{bounded degree}. We show that, when equivalence is only required on the class of all structures of degree at most $d$ for $d=2$ ($d\geq 3$), any formula $\phi$ from $\FO$ logic can be turned into a \BSNF\ in $2$-fold ($3$-fold) exponential time in the size of $\phi$ (cf. Theorem~\ref{thm:bsnf-upper-bound}). We complement both upper bounds by matching lower bounds (cf. Theorem~\ref{Thm:lowerBound_d_2} and Theorem~\ref{Thm:lowerBound_d_ge2}). In particular, our upper bounds imply corresponding upper bounds on the number of pebbles to be placed in the first stage of the local Ehrenfeucht-Fra\"iss\'e game and on the size of the automata for $\FO$ logic described in \cite{SchB99} when restricting attention to classes of structures of bounded degree.

Our algorithm for the construction of \BSNF\ relies on a transformation of $\FO$ formulae into \HNF, as described in \cite{BolligKuske2012, HKS13-LICS, HKS16-LICS}. The most challenging task is to turn so-called type-formulae, which describe the isomorphism type of the $r$-neighbourhood of their free variables, into \BSNF. Our lower bound proofs use techniques already employed in \cite{StockmeyerMeyer1973,FrickGrohe-FO-MSO-revisited,FlGr06,DGKS07,BolligKuske2012,HKS13-LICS,DBLP:journals/corr/HarwathHS15}.

The rest of the paper is structured as follows. Section~\ref{section:preliminaries} fixes basic notations used throughout the paper. Section~\ref{section:upper-bounds} presents the algorithm leading to our upper bounds. Section~\ref{section:lower-bounds} provides the matching lower bounds. Due to space restrictions, some proof details are deferred to an appendix.
\subsubsection*{Acknowledgements}
The authors would like to thank Dietrich Kuske, who brought the normal form of Barthelmann and Schwentick to their attention and posed the question about its complexity. We also would like to thank Nicole Schweikardt for helpful hints on the first version of this paper.
\section{Preliminaries}
\label{section:preliminaries}
We use $\NN$ to denote the set of natural numbers, i.e., the set of nonnegative integers, and we let $\Npos \isdef \NN \setminus \set{0}$. 
For all $m,n\in\NN$ with $m\leq n$, we write $[m,n]$ for the set $\set{ i \in \NN \colon m \leq i \leq n}$ and let $[m,n]\isdef \emptyset$ if $m > n$. By $[n]$ we abbreviate the set $[1,n]$. 

$\Rpos$ is the set of all reals greater than or equal to $1$.  For a real number $r> 0$, we write $\log r$ to denote the logarithm of $r$ with respect to base $2$. 
For every function $f \colon \NN\to\Rpos$, we write $\poly{f(n)}$ for the class of all functions $g\colon\NN\to\Rpos$ for which there exists a number $c > 0$ such that $g(n) \leq (f(n))^c$ for all sufficiently large $n \in \NN$.

The function $\Tower \colon \NN \times \Rpos \to \Rpos$ is defined by $\Tower(0, x) \isdef x$ and $\Tower(k{+}1, x) \isdef 2^{\Tower(k, x)}$ for all $k\in\NN,x\in\Rpos$. I.e., $\Tower(k, x)$ is a tower of $2s$ of height $k$ with $x$ on top. We furthermore abbreviate $\Tower(n) \isdef \Tower(n, 1)$. 
A function $f\colon\NN\to\Rpos$ is \emph{at most $k$-fold exponential}, for some $k\in\NN$, if $f$ belongs to the class $\Tower(k, \poly{n})$. More generally, $f$ is \emph{elementary} if it is at most $k$-fold exponential for some $k\in\NN$ and \emph{non-elementary} if there is no such $k\in\NN$.

\subparagraph*{Signatures and Structures}
\label{section:preliminaries:signatures-and-structures}
For signatures, structures, and $\FO$ logic, we use the standard notation,  cf. \cite{EbbinghausFlum, Libkin-FMT}.
A \emph{signature} $\sigma$ is a \emph{finite} set $\set{R_1, \dotsc, R_{\ell}, c_1, \dotsc, c_k}$ of $\ell \in \NN$ relation symbols $R_1, \dotsc, R_{\ell}$ and $k\in\NN$ constant symbols $c_1,\dotsc,c_k$. Each relation symbol $R$ has an \emph{arity} $\ar(R)\in\Npos$. A $\sigma$-structure $\A$ is a tuple $(A, R_1^{\A}, \dotsc, R_{\ell}^{\A}, c_1^{\A}, \dotsc, c_{k}^{\A})$, where $A$ is a \emph{finite} and non-empty set, called the \emph{universe} of $\A$, where each relation symbol $R_i$, $i \in [\ell]$, is interpreted by the relation $R_i^{\A} \subseteq A^{\ar(R_i)}$ and where each constant symbol $c_i$, $i\in[k]$, is interpreted by an element $c_i^{\A} \in A$. We write $\A\cong\B$ to express that $\A$ is isomorphic to a second $\sigma$-structure $\B$. 

In the following, we suppose that $\sigma$ is a \emph{relational} signature, i.e., a signature that only contains relation symbols. 
A $\sigma$-structure $\A$ is a \emph{substructure} of a $\sigma$-structure $\B$ if $A \subseteq B$ and $R^{\A} \subseteq R^{\B}$ for each $R\in\sigma$. In particular, $\A$ is the \emph{substructure of $\B$ induced by $A$ (for short: $\A = \B[A]$)} if $R^{\A} = R^{\B} \cap A^{\ar(R)}$ for each $R\in\sigma$.

A $\sigma$-structure $\B$ is a \emph{disjoint union} of $s\in\Npos$ $\sigma$-structures $\A_1,\dotsc, A_s$ if the universes $A_1, \dotsc, A_s$ are pairwise disjoint, $B$ is the union of $A_1, \dotsc, A_s$, and $R^{\B}$ is the union of $R^{\A_1}, \dotsc, R^{\A_s}$ for all $R\in\sigma$. 
A $\sigma$-structure $\A$ is a \emph{component} of $\B$ if $\B$ is the disjoint union of $\A$ and some other $\sigma$-structure.

\subparagraph*{First-Order Logic}
\label{section:preliminaries:first-order-logic}
By $\FO[\sigma]$ we denote the class of all first-order formulae of signature $\sigma$. That is, $\FO[\sigma]$ is built from atomic formulae of the form $x_1{=}x_2$ and $R(x_1, \dotsc, x_{\ar(R)})$, for $R\in\sigma$ and variables or constant symbols $x_1,x_2,\dotsc,x_{\ar(R)}$, and closed under the Boolean connectives $\neg, \lor$ and existential first-order quantifiers $\exists x$ for any variable $x$.\footnote{As usual, $\forall x$, $\land$, $\impl$, $\gdw$ will be used as abbreviations when constructing formulae.} By $\FO$ we denote the union of all $\FO[\sigma]$ for arbitrary signatures $\sigma$. 

The \emph{size} $\size{\phi}$  of an $\FO[\sigma]$-formula is its length when viewed as a word over the alphabet $\sigma \cup \textup{Var} \cup \set{,} \cup \set{=,\exists,\neg,\lor, (, )}$, where \textup{Var} is a countable set of variable symbols.
The \emph{quantifier rank} $\qr(\phi)$ of an $\FO$-formula $\phi$ is defined as the maximal nesting depth of its quantifiers.
By $\free(\phi)$ we denote the set of all \emph{free variables} of $\phi$. A \emph{sentence} is a formula $\phi$ with $\free(\phi)=\emptyset$. We write $\phi(\ov{x})$, for $\ov{x} = (x_1, \dotsc, x_n)$ with $n\in\NN$, to indicate that $\free(\phi)$ is a subset of $\set{x_1,\dotsc,x_n}$.

If $\A$ is a $\sigma$-structure and $\ov{a} = (a_1,\dotsc, a_n) \in A^n$, we write $(\A,\ov{a})\models \phi(\ov{x})$ or $\A \models \phi[\ov{a}]$ to indicate that the formula $\phi(\ov{x})$ is satisfied in $\A$ when interpreting the free occurences of the variables $x_1, \dotsc, x_n$ with the elements $a_1,\dotsc,a_n$. 
Two formulae $\phi(\ov{x})$ and $\psi(\ov{x})$ over a signature $\sigma$ are \emph{$\CC$-equivalent}, for a class $\CC$ of $\sigma$-structures, if for every $\A\in\CC$ and $\ov{a}\in A^n$, we have $\A \models \phi[\ov{a}]$ if, and only if, $\A \models \psi[\ov{a}]$. In particular, we call $\phi$ and $\psi$ \emph{equivalent} if they are $\CC^{\sigma}$-equivalent for the class $\CC^{\sigma}$ of \emph{all} $\sigma$-structures.

For an $\FO[\sigma]$-formula $\phi(y)$ and a $k\in\Npos$, it is easy to define an $\FO[\sigma]$-formula $\exists^{\geq k}y\,\phi(y)$ 
% \begin{equation*}
%    \ \isdef \quad  \exists x_1 \cdots \exists x_k 
%       \Biggl(
%         \Und_{1\leq i < j \leq k} \neg x_i{=}x_j
%         \ \ \und \ \ \forall z \
%         \Bigl(
%           \Oder_{1\leq i\leq k} x_i{=}z \ \impl \ \phi(z)
%         \Bigr)
%       \Biggr) 
% \end{equation*}
of size $\lin{k^2 + \size{\phi}}$ such that $\A \models \exists^{\geq k}y\,\phi(y)$ for a $\sigma$-structure $\A$ if, and only if, there are at least $k$ elements $a$ in $A$ such that $\A \models \phi[a]$, cf., e.g., \cite{HKS13-LICS}. We write 
%$ \exists^{=k}y\,\sph_{\tau}(y)$ for $\exists^{\geq k}y\,\sph_{\tau}(y) \land \neg \exists^{\geq k+1}y\,\sph_{\tau}(y)$
 $ \exists^{=k}y\,\phi(y)$ for $\exists^{\geq k}y\,\phi(y) \land \neg \exists^{\geq k+1}y\,\phi(y)$
and $\exists^{=0} y\,\phi(y)$ for $\neg \exists y\,\phi(y)$.

\subparagraph*{Gaifman Graph and Classes of Structures of Bounded Degree}
\label{section:preliminaries:gaifman-graph}
Let $\sigma$ be a signature and let $\A$ be a $\sigma$-structure.
The \emph{Gaifman graph} $G_\A$ of $\A$ is the undirected and loop-free graph with node set $A$ and
an edge between two distinct nodes $a,b\in A$ if, and only if, there is an $R \in \sigma$ and a tuple $(a_1 , \dotsc , a_{\ar(R)} ) \in R^\A$, such that $a,b \in \set{a_1, \dotsc , a_{\ar(R)}}$. 
For elements $a,b\in A$, we denote by $\dist^{\A}(a,b)$ the length of a shortest path between $a$ and $b$ in $G_{\A}$ or $\infty$ if there is no such path.
 The $\sigma$-structure $\A$ is \emph{connected} if its Gaifman graph is connected, i.e., if $\dist^{\A}(a,b) < \infty$ for all $a,b\in A$. 

 For each $d\in\NN$, an $\FO[\sigma]$-formula $\dist_{\leq d}(x, y)$
 can be constructed (cf., e.g., \cite{HKS13-LICS}) in time $\lin{\size{\sigma}\cdot\log d}$ for $d\geq 2$,
 such that for each $\sigma$-structure $\A$ and all $a,b\in A$,
\begin{equation*}
  \A \ \models \ \dist_{\leq d}[a,b] 
  \quad \text{iff} \quad
  \dist^{\A}(a,b) \leq d.
\end{equation*}

\noindent For every $r\geq 0$ and $a \in A$, the \emph{$r$-neighbourhood of $a$ in $\A$} is the set
\begin{equation*}
  N_r^{\A}(a) \ \isdef \set{b \in A \colon \dist^{\A}(a,b)\leq r}
\end{equation*}
and the $r$-neighbourhood $N_r^{\A}(\ov{a})$ of a tuple $\ov{a} = (a_1,\dotsc,a_n)\in A^n$ of length $n\in\Npos$ is the union of the sets $N_r^{\A}(a_i)$ for all $i \in [n]$. 

The \emph{degree} of $\A$ is the degree of its Gaifman graph $G_{\A}$. We say that $\A$ is \emph{$d$-bounded}, for a \emph{degree bound $d\in\NN$}, if no node in $G_{\A}$ has more than $d$ neighbours.
By $\CC^{\sigma}_d$ we denote the \emph{class of all $d$-bounded $\sigma$-structures}. Two $\FO[\sigma]$-formulae $\phi$ and $\psi$ are \emph{$d$-equivalent} if they are $\CC^{\sigma}_d$-equivalent. 
To bound the cardinality of a neighbourhood in a $d$-bounded structure in dependence from its radius $r$, let $\nu_d \colon \NN\to\NN$ be defined by
\begin{equation*}
  \nu_d(r) \ \isdef \quad 1 + d \cdot \sum_{0\leq i < r} (d-1)^i.
\end{equation*}
Then, if $\A$ is $d$-bounded, for any element $a\in A$ and any $r\in\NN$, we have $|N_r^{\A}(a)| \ \leq \ \nu_d(r)$. In particular, $\nu_0(r) = 1$, $\nu_1(r) \leq 2$, $\nu_2(r) = 2r+1$, and $(d-1)^r \leq \nu_d(r) \leq d^{r+1}$ for $d\geq 3$. I.e., $\nu_d$ is growing linearly for $d\leq 2$ and exponentially for~$d \geq 3$. 

\subparagraph*{Isomorophism Types}
\label{section:preliminaries:types}
Let $\sigma$ be a relational signature. For each $n \in \Npos$, we let $\sigma_n \isdef \sigma \cup \set{ c_1, \dotsc, c_n}$ for pairwise distinct constant symbols $c_1, \dotsc, c_n$. For any $r\in\NN$, an \emph{$r$-type (with $n$ centres over $\sigma$)} is a $\sigma_n$-structure $\tau = (\A, a_1, \dotsc, a_n)$ that consists of a $\sigma$-structure $\A$ and an interpretation $a_i \in A$ for each constant symbol $c_i$, $i \in [n]$ such that $ A = N_r^{\A}(a_1,\dotsc,a_n)$. I.e., every element of $\A$ has distance $\leq r$ to at least one of the $c_i$'s. We also call $a_1, \dotsc, a_n$ the \emph{centres of~$\tau$}. 
If $\B$ is a $\sigma$-structure, $a_1,\dotsc,a_n \in B$ for some $n\in\Npos$, and $r\in\NN$,
then $\N_r^{\B}(a_1,\dotsc,a_n)$ denotes the \emph{$r$-type $(\B[N_r^{\B}(a_1,\dotsc,a_n)], a_1, \dotsc, a_n)$ of $a_1,\dotsc,a_n$ in $\B$}.
We say that $a_1,\dotsc,a_n$ \emph{realise} an $r$-type $\tau$ with $n$ centres if $\N_r^{\B}(a_1,\dotsc,a_n) \cong \tau$. For short -- we often speak of types instead of isomorphism types.

We will often use the following observations from folklore (cf.,~\cite{HKS16-LICS, berkholz_et_al:LIPIcs:2017:7053, KuskeSchweikardt2017}):
\begin{lemma}
  \label{lem:types}
  Let $\sigma$ be a relational signature, let $d \in \NN$ with $d\geq 2$, and let $\A$ be a $d$-bounded $\sigma$-structure. For all $r\in\NN$, $n\in\Npos$, and $\ov{a} = (a_1,\dotsc,a_n)\in A^n$, it holds that:
  \begin{enumerate}
  \item $|N_r^{\A}(\ov{a})| \leq n\cdot\nu_d(r)$,
  \item if $\N_r^{\A}(\ov{a})$ is connected, then $N_r^{\A}(\ov{a}) \subseteq N_{r+(k-1)(2r+1)}^{\A}(a_i)$ for all $i\in [n]$, 
  \item given $\A$, $\ov{a}$, and $r$, the $r$-type of $\ov{a}$ in $\A$ can be computed in time $(n\cdot\nu_d(r))^{\lin{\size{\sigma}}}$, and 
  \item given two $d$-bounded $r$-types $\tau$ and $\tau'$ with $n$ centres over $\sigma$, it can be decided in time $2^{\lin{\size{\sigma}(n\cdot\nu_d(r))^2}}$ whether $\tau \cong \tau'$.
  \end{enumerate}
\end{lemma}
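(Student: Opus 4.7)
The plan is to verify the four parts essentially independently, since they are four standard folklore facts and the proofs are largely disjoint.

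For part (1), the bound $|N_r^{\A}(\ov{a})| \le n\cdot\nu_d(r)$ follows by writing $N_r^{\A}(\ov{a}) = \bigcup_{i\in[n]} N_r^{\A}(a_i)$ and applying the single-centre bound $|N_r^{\A}(a_i)|\le \nu_d(r)$, which was already derived from the definition of $\nu_d$ in the preliminaries. A one-line union-bound argument suffices.

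For part (2) (reading the undefined $k$ as $n$), the main step is to show that, whenever the induced substructure on $N_r^{\A}(\ov{a})$ is connected, any two centres satisfy $\dist^{\A}(a_i,a_j)\le (n-1)(2r+1)$; then for an arbitrary $b\in N_r^{\A}(\ov{a})$, picking the index $j$ with $b\in N_r^{\A}(a_j)$ and using the triangle inequality yields $\dist^{\A}(a_i,b)\le r + (n-1)(2r+1)$. To prove the centre-to-centre bound, I would take a shortest path $P$ from $a_i$ to $a_j$ in the Gaifman graph of the (connected) induced substructure, and label each vertex on $P$ by some index whose $r$-ball contains it. This labelled sequence visits the $n$ centres in some order; whenever the label changes from $p$ to $q$ along two consecutive vertices $u,v$ of $P$, we have $u\in N_r^{\A}(a_p)$ and $v\in N_r^{\A}(a_q)$ with $\dist^{\A}(u,v)\le 1$, hence $\dist^{\A}(a_p,a_q)\le 2r+1$. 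Keeping only the first occurrence of each label, we obtain a chain of at most $n-1$ hops, each contributing at most $2r+1$, giving the claim. I expect this step to be the main (if modest) obstacle, since one has to be careful that the path stays inside $N_r^{\A}(\ov{a})$ so that the labelling is well defined.

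For part (3), I would compute $N_r^{\A}(a_i)$ for each $i\in[n]$ by a breadth-first search of depth $r$ in the Gaifman graph $G_{\A}$; by part (1) each BFS visits at most $\nu_d(r)$ vertices and for each vertex inspects at most $d$ neighbours. Taking the union yields the universe $U$ of $\N_r^{\A}(\ov{a})$. It remains to restrict each relation $R\in\sigma$ to $U^{\ar(R)}$; enumerating tuples needs $|U|^{\ar(R)}$ steps, and $\ar(R)\le O(\size{\sigma})$, so in total the construction runs in time $(n\cdot\nu_d(r))^{\lin{\size{\sigma}}}$ as claimed.

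For part (4), the universes of $\tau$ and $\tau'$ have size at most $N:=n\cdot\nu_d(r)$ by part (1), and any isomorphism must fix the $n$ centres. Brute-force enumeration tries all $(N-n)!\le 2^{\lin{N\log N}}$ bijections of the non-centre elements; for each candidate, checking preservation of every relation requires inspecting at most $\size{\sigma}\cdot N^{\lin{\size{\sigma}}}$ tuples. Multiplying and using $N\log N\le N^2$ absorbs everything into the target bound $2^{\lin{\size{\sigma}(n\cdot\nu_d(r))^2}}$.
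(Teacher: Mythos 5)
The paper does not prove this lemma at all: it is stated as a folklore observation with pointers to the literature, so there is no ``paper proof'' to compare against. Your four arguments are the standard ones and they are correct: the union bound for (1), the centre-to-centre distance argument for (2), BFS plus relation restriction for (3), and brute-force enumeration of centre-respecting bijections for (4); the arithmetic in (4), absorbing $(N-n)!\le 2^{O(N^2)}$ and the $N^{O(\size{\sigma})}$ verification cost into $2^{O(\size{\sigma}N^2)}$, checks out, and your reading of the paper's $k$ in part (2) as $n$ is the only sensible one (it is a typo). One small point to tighten in (2): after labelling the vertices of the shortest path, the subsequence of \emph{first occurrences} of labels is not itself a chain of witnessed hops --- the first occurrence of a new label $\ell_{t+1}$ is preceded by a vertex carrying \emph{some} earlier label $\ell_s$ with $s\le t$, not necessarily $\ell_t$. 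What you actually get is a connected graph (in fact a tree) on the at most $n$ labels in which every edge joins centres at distance $\le 2r+1$; taking a simple path from $i$ to $j$ in that tree, which has at most $n-1$ edges, and applying the triangle inequality along it gives $\dist^{\A}(a_i,a_j)\le (n-1)(2r+1)$. With that one-line repair the argument is complete.
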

\begin{lemma}
  \label{lem:compute-representatives}
  There is an algorithm which upon input of a relational signature $\sigma$, a degree bound $d\in\NN$ with $d\geq 2$, a radius $r\in\NN$, and a number $n \in \Npos$, computes a set $\mathfrak{T}_r^{\sigma,d}(n)$ of $d$-bounded $r$-types with $n$ centres over $\sigma$, such that for every $d$-bounded $r$-type $\tau$ with $n$ centres over $\sigma$ there is exactly one $\tau' \in \mathfrak{T}_r^{\sigma,d}(n)$ such that $\tau \cong \tau'$. The algorithm's runtime is $2^{(n\cdot\nu_d(r))^{\lin{\size{\sigma}}}}$. 
  Furthermore, upon input of a $d$-bounded $r$-type $\tau$ with $n$ centres over $\sigma$, the particular $\tau'\in\mathfrak{T}_r^{\sigma,d}(n)$ with $\tau\cong\tau'$ can be computed in time $2^{(n\cdot\nu_d(r))^{\lin{\size{\sigma}}}}$. 
\end{lemma}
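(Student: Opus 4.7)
The plan is to exploit the bound on the size of an $r$-type. By Lemma~\ref{lem:types}(1), every $d$-bounded $r$-type with $n$ centres has at most $N \isdef n\cdot\nu_d(r)$ elements, so up to isomorphism its universe may be taken to be a subset of $[N]$. The algorithm enumerates all $\sigma_n$-structures over such subsets: for each $B \subseteq [N]$, each choice of relations $R^{\tau}\subseteq B^{\ar(R)}$ for the $R\in\sigma$, and each interpretation of the $n$ constant symbols, one obtains a candidate structure. The number of candidates is bounded by $2^{\sum_{R\in\sigma} N^{\ar(R)}}\cdot N^n$, which lies in $2^{N^{\lin{\size{\sigma}}}}$, matching the stated time budget.

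Each candidate is then checked in time polynomial in $N$ for $d$-boundedness of its Gaifman graph and for the type-defining condition that every element lies within distance $r$ of some centre; candidates failing either test are discarded. The survivors are precisely all $d$-bounded $r$-types with $n$ centres living on a subset of $[N]$, so it remains to keep one representative per isomorphism class. For this I would iterate through the survivors in a fixed order, maintaining a growing set $\mathfrak{T}$, and insert a new candidate $\tau$ iff the isomorphism test of Lemma~\ref{lem:types}(4) -- which runs in time $2^{\lin{\size{\sigma}N^2}}$ -- returns negatively for every $\tau'\in\mathfrak{T}$ already collected. Defining $\mathfrak{T}_r^{\sigma,d}(n)\isdef\mathfrak{T}$ gives the desired representative system, which by construction contains exactly one element per isomorphism class. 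For the second part, given an input type $\tau$, the algorithm iterates through $\mathfrak{T}_r^{\sigma,d}(n)$ and returns the first element isomorphic to $\tau$; this is well-defined precisely because the enumeration and deduplication above are deterministic, so two invocations produce the same set.

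The only step that is not purely bookkeeping is the runtime accounting: one must verify that the candidate count, the pairwise isomorphism tests, and the polynomial per-candidate validation all combine into $2^{(n\cdot\nu_d(r))^{\lin{\size{\sigma}}}}$. Each individual contribution is at most $2^{N^{\lin{\size{\sigma}}}}$ -- note in particular that the per-test cost $2^{\lin{\size{\sigma}N^2}}$ is absorbed because $\size{\sigma}\cdot N^2$ is polynomial in $N^{\lin{\size{\sigma}}}$ for $N,\size{\sigma}\geq 1$ -- and both the quadratic number of test invocations and the additive polynomial overhead remain within the same class. This same bound also covers the cost of the lookup step in the second part, so the lemma follows. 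I expect this accounting, together with fixing a deterministic canonical enumeration order so that the two algorithms agree, to be the only real piece of care required.
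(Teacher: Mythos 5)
Your proof is correct. The paper itself gives no proof of this lemma --- it is stated as a folklore observation with citations to \cite{HKS16-LICS, berkholz_et_al:LIPIcs:2017:7053, KuskeSchweikardt2017} --- so there is nothing to compare against; your brute-force enumeration over universes of size at most $n\cdot\nu_d(r)$, followed by validation and greedy deduplication via the isomorphism test of Lemma~\ref{lem:types}(4), is exactly the standard argument one would supply, and your runtime accounting (absorbing the quadratically many tests of cost $2^{\lin{\size{\sigma}N^2}}$ into $2^{N^{\lin{\size{\sigma}}}}$) is sound. The only caveat is the degenerate case $N = n\cdot\nu_d(r) = 1$, where $2^{N^{\lin{\size{\sigma}}}}$ is a constant while there are about $2^{|\sigma|}$ one-element structures to enumerate; this is an imprecision in the lemma's stated bound itself (the paper handles the analogous edge case separately when bounding the size of type-formulae) and not a flaw in your argument.
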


Given an $r$-type $\tau$ with $n$ centres over $\sigma$, for some $r\in\NN$ and $n\in\Npos$, one can construct a \emph{type-formula} $\sph_{\tau}(\ov{x})$ with $\ov{x} = (x_1,\dotsc,x_n)$ such that for every $\sigma$-structure $\A$ and every tuple $\ov{a} \in A^n$,
\begin{equation*}
  \A  \models  \sph_{\tau}[\ov{a}] \quad\text{if, and only if,} \quad \N_r^{\A}(\ov{a}) \cong \tau.
\end{equation*}
More precisely, if $\tau = (\B, b_1,\dotsc,b_n)$ and $B = \set{e_1,\dotsc,e_N}$ for some $N\in\Npos$, the type-formula $\sph_{\tau}(\ov{x})$ can be defined by
\label{expr:sphere-formula}
\begin{equation*}
  \begin{split}
    \exists z_1 \cdots \exists z_N\, \Biggl( 
      & \Und_{1\leq j\leq N} \Oder_{1\leq i \leq n}\!\! \dist(x_i, z_j) \leq r \ \ \und \ \ \forall z\, \Biggl( \Oder_{1\leq i\leq n}\!\!\dist(x_i, z) \leq r \ \impl \Oder_{1\leq j \leq N} z{=}z_j\Biggr) \\
      & \ \ \und \ \Und_{1\leq i\leq n}\!\! x_i{=}z_{q_i} \ \ \und \ \ \psi(z_1, \dotsc, z_N)
      \Biggr)
  \end{split}
\end{equation*}
where $\psi(z_1,\dotsc, z_N)$ is a conjunction of all atomic and negated atomic $\FO[\sigma]$-formulae $\phi(z_1,\dotsc,z_N)$ such that $\B \models \phi[e_1,\dotsc,e_N]$, and $q_1, \dotsc, q_n \in [N]$ are chosen such that $b_i = e_{q_i}$ for each $i \in [n]$. 
Thus, if $\tau$ is $d$-bounded for some $d\in\NN$, the formula 
$\sph_{\tau}(\ov{x})$ can be constructed in time $\lin{\size{\sigma}}$ if $n\cdot\nu_d(r) = 1$, and otherwise in time $(n\cdot\nu_d(r))^{\lin{\size{\sigma}}}$.

\subparagraph*{Local Formulae and the Barthelmann-Schwentick Normal Form}
\label{section:preliminaries:bsnf}
Let $\sigma$ be a relational signature. 
An $\FO[\sigma]$-formula $\phi(x_1,\dotsc,x_m, y_1,\dotsc, y_n)$  with $m\in\NN$ and $n\in\Npos$ is \emph{$r$-local around $y_1,\dotsc,y_n$} if for each $\sigma$-structure $\A$ and all elements $a_1,\dotsc, a_m, b_1, \dotsc, b_n \in A$,
\begin{equation*}
  \begin{split}
    \A \ \models \ & \phi[a_1,\dotsc,a_m,b_1,\dotsc,b_n] \\
    \quad\text{if, and only if,}\quad
    \A[ \set{a_1,\dotsc,a_m} \cup N_r^{\A}(b_1,\dotsc,b_n)] 
  \ \models \ & \phi[a_1,\dotsc,a_m,b_1,\dotsc,b_n].    
  \end{split}
\end{equation*}
We call $\phi$ \emph{local around $y_1,\dotsc,y_n$} if it is $r$-local around $y_1,\dotsc,y_n$ for some $r\in\NN$. As an example, the type-formula $\sph_{\tau}(\ov{x})$ is $r$-local around $\ov{x}$ if $\tau$ is an $r$-type.

A formula is in \emph{Barthelmann-Schwentick normal form} (for short: \BSNF) if it has the shape 
\begin{equation*}
  \exists y_1 \cdots \exists y_n \forall z\, \phi(\ov{x}, y_1, \dotsc, y_n, z)
\end{equation*}
for an $n\geq 0$ and a formula $\phi$ where every quantification is
restricted to elements of the universe of distance at most $r$ from $z$, i.e., the formula $\phi$ is $r$-local around $z$, for some $r\geq 0$ (cf., \cite{SchB99}). Its locality radius is~$r$.
A \BSNF-formula is a formula in \BSNF\ and a \BSNF-sentence is a sentence in \BSNF.

\subparagraph*{Forests and Trees}
\label{section:preliminaries:forests-and-trees}
A \emph{forest} is a directed graph where every vertex has indegree at most $1$ and
whose Gaifman graph is acyclic. A \emph{tree} is a connected forest. In forests, as well as in trees, nodes of indegree $0$ are called \emph{roots}. By $\FF$ we denote the class of all finite forests. 
The \emph{height} of a forest $\F$ (a tree $\T$) is the length of the longest path in $\F$ (in $\T$), starting in a (the) root node. $\FF_{\sle h}$ denotes the class of all finite forest with height $\le h$.

\section{Upper Bounds}
\label{section:upper-bounds}

This section's aim is to show that, in contrast to the non-elementary lower bound on trees of unbounded degree (cf. Theorem \ref{Thm:lowerBound_non-elem}), Barthelmann-Schwentick normal forms can be computed in elementary time when equivalence to the input formula is only required on a class of structures of bounded degree. The main result of this section can be stated as follows:

\begin{theorem}
  \label{thm:bsnf-upper-bound}
  There is an algorithm which, on input of a degree bound $d\in\NN$ with $d\geq 2$, a relational signature $\sigma$, and a formula $\phi(\ov{x})$ from $\FO[\sigma]$, computes a formula $\phi^B(\ov{x})$ in \BSNF\ that is $d$-equivalent to $\phi(\ov{x})$.   
  The algorithm  runs in time
  \begin{equation*}
    2^{(\size{\phi}\cdot\nu_d((n+1)(2\cdot 4^q+1)))^{\lin{\size{\sigma}}}},
  \end{equation*}
  where $n,q\geq 0$ are the number of free variables and the quantifier rank of $\phi$, respectively.
\end{theorem}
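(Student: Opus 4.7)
The plan is to proceed in two stages. First, apply the known algorithms of \cite{BolligKuske2012, HKS13-LICS, HKS16-LICS} to transform $\phi(\ov{x})$ into a $d$-equivalent Hanf normal form $\phi^H$: a Boolean combination of type-formulae $\sph_{\tau'}(\ov{x})$ describing the $r$-neighbourhood type of the free variables, and of counting statements $\exists^{\geq k} y\, \sph_{\tau}(\ov{x}, y)$ specifying the number of elements whose $r$-neighbourhood together with $\ov{x}$ realises a given type $\tau$. Here $r$ is of order $4^q$ and the relevant $d$-bounded types have size bounded by $(n{+}1)\cdot\nu_d(r)$, as controlled by Lemma~\ref{lem:types} and Lemma~\ref{lem:compute-representatives}. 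The second -- and main -- stage is to turn $\phi^H$ into a single \BSNF-formula.

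For this second stage I exploit the fact that on a $d$-bounded structure, the truth of $\phi^H$ depends only on the \emph{type profile} relative to $\ov{x}$: the vector assigning to each relevant type $\tau$ its multiplicity capped at the maximum threshold $K$ appearing in $\phi^H$. There are only finitely many such profiles, and $\phi^H$ is equivalent to the disjunction, over an ``accepting'' set of profiles $P$, of the statement ``the structure realises exactly profile $P$''. I would then encode this disjunction inside a single \BSNF-prefix $\exists y_1 \dotsc \exists y_n\, \forall z$ as follows: the existentially quantified $\ov{y}$ list, for each type $\tau$ in the target profile, up to $K$ representatives of $\tau$, together with a ``role'' marker assigned to each $y_i$ by the choice of profile. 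The universal $\forall z$ then locally verifies that (i) every $y_i$ realises the type claimed by its role (checkable when $z{=}y_i$, since $N_r^{\A}(z) = N_r^{\A}(y_i)$ in that case), (ii) $z$ itself realises one of the types listed in the profile, and (iii) if the profile prescribes an exact count $<K$ for $z$'s type, then $z$ coincides with one of the designated $y_i$'s. The locality radius required for $\phi$ in the \BSNF-prefix must cover the union of $r$-neighbourhoods of $z$ and those $y_i$'s close to $z$; by Lemma~\ref{lem:types}(2) this leads precisely to the radius $(n{+}1)(2\cdot 4^q{+}1)$ visible in the claimed runtime.

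The key obstacle is that the $\forall z$-quantifier must simultaneously enforce \emph{upper} bounds on type counts while the formula beneath sees only the $r$-neighbourhood of $z$ and therefore cannot directly inspect pebbles $y_i$ that sit far from $z$. The resolution is to make $\ov{y}$ maximally informative: for every type whose profile value is below $K$, $\ov{y}$ contains \emph{all} realisers of that type, so any additional realiser detected by $z$ (which must then lie within $r$ of $z$ itself) can be ruled out by a purely local check. With this set-up in place, the size and runtime analysis follows by tracking (a) the type-enumeration cost from Lemma~\ref{lem:compute-representatives}, of order $2^{(n\nu_d(r))^{\lin{\size{\sigma}}}}$; (b) the number of candidate profiles, bounded by $K^{|\TT|}$ for the finite set $\TT$ of relevant types; and (c) the size of the resulting $r$-local formula, which is polynomial in those quantities. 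Substituting $r = \lin{4^q}$ and the bound on $\nu_d$ yields the stated runtime; since $\nu_d$ is linear for $d\leq 2$ and exponential for $d\geq 3$, the overall bound collapses to $2$-fold exponential in the former case and $3$-fold exponential in the latter, matching the statement of Theorem~\ref{thm:bsnf-upper-bound}.
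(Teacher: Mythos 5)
Your first stage coincides with the paper's, but your second stage has a genuine quantitative gap: the disjunction over type profiles costs one exponential too much. The number of pairwise non-isomorphic $d$-bounded $r$-types with one centre for $r=4^q$ (and likewise the number of distinct counting-sentences and type-formulae that may occur in $\phi^H$) is already of order $2^{(\nu_d(r))^{\lin{\size{\sigma}}}}$, i.e., it essentially exhausts the entire time budget of the theorem. A profile assigns one of $K{+}1$ capped multiplicities to each such type, so the number of profiles -- and hence the length of your disjunction and the size of the output -- is of order $(K{+}1)^{2^{(\nu_d(r))^{\lin{\size{\sigma}}}}}$; even the coarser variant that only records a truth value for each atomic constituent of $\phi^H$ gives $2^{\size{\phi^H}}$ many cases. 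Either way this is $3$-fold exponential for $d=2$ and $4$-fold exponential for $d\geq 3$, one level above the claimed bounds, and merging the disjuncts under a single $\exists\ov{y}\forall z$ prefix cannot shrink the formula below the number of disjuncts. The paper avoids expanding into complete descriptions: it first makes the Hanf normal form \emph{positive} (Lemma~\ref{lem:fo-to-hnfp}, replacing $\neg\exists^{\geq k}y\,\sph_{\tau}(y)$ by a disjunction of exact-counting-sentences and $\neg\sph_{\rho}(\ov{x}')$ by a disjunction over the remaining types in $\mathfrak{T}^{\sigma,d}_r(n')$, both within budget), converts each counting-sentence and each type-formula to \BSNF\ individually (Lemmas~\ref{lem:counting-sentences-to-BSNF} and~\ref{lem:type-formulae-to-BSNF}), and then collapses the resulting \emph{positive} Boolean combination into a single \BSNF-formula with only polynomial overhead via the prefix-merging constructions for $\und$ and $\oder$ from \cite{SchB99}. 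Preserving the Boolean structure of $\phi^H$ rather than enumerating its models over profiles is precisely what saves the exponential.

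A second, independent omission: you do not explain how the $n$-centre type-formulae $\sph_{\rho}(\ov{x})$ for the free variables are checked by a formula that is local around the single universal variable $z$, which is the technically hardest step. The paper's Lemma~\ref{lem:type-formulae-to-BSNF} decomposes $\rho$ into its connected components, observes that each component lies within the $\ell_i(2r{+}1)$-neighbourhood of one of its centres, and builds a formula that triggers the check for component $i$ exactly when $z$ equals that centre while also certifying that the centres of the other components are at distance $>2r$. Your remark deriving the radius $(n{+}1)(2\cdot 4^q{+}1)$ from Lemma~\ref{lem:types} suggests you see where the blow-up comes from, but without this component-wise localisation the $\forall z$ quantifier cannot enforce the joint neighbourhood type of $\ov{x}$.
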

\begin{remark}
  Under the assumption that $\sigma$ only contains relation symbols that actually occur in $\phi$ and
  since $n,q < \size{\phi}$, the algorithm of Theorem~\ref{thm:bsnf-upper-bound} runs in time
  \begin{equation*}
    2^{2^{\poly{\size{\phi}}}} \quad \text{for $d=2$, and}\quad
    2^{d^{2^{\lin{\size{\phi}}}}} \quad \text{for $d\geq 3$.}
  \end{equation*}
\end{remark}
The algorithm described in Theorem~\ref{thm:bsnf-upper-bound} relies on the construction of Hanf normal forms described in \cite{BolligKuske2012, HKS13-LICS, HKS16-LICS} and proceeds in the following four steps, which are carried out in detail in the subsequent Section~\ref{section:positive-hnf}, \ref{section:counting-sentences-to-BSNF}, \ref{section:type-formulae-to-BSNF}, and \ref{section:bsnf-upper-bound}, respectively:
\begin{enumerate}
\item The input formula $\phi(\ov{x})$ is transformed into a $d$-equivalent positive Hanf normal form~$\phi^H_+(\ov{x})$. Intuitively, a positive Hanf normal form is built from the following sub-formulae using \emph{only the logical connectives $\und$ and $\oder$}:
  \begin{itemize}
  \item \emph{Counting-sentences}, which either state that there are at least $k\in\Npos$ elements that realise a given type or that there are precisely $k\in\NN$ elements that realise a given type.
  \item \emph{Type-formulae}, which check whether the interpretation of their free variables realises a given type.
  \end{itemize}
\item Each counting-sentence in $\phi^H_+(\ov{x})$ is replaced by an equivalent sentence in \BSNF.
\item Each type-formula in $\phi^H_+(\ov{x})$ is replaced by a $d$-equivalent formula in \BSNF.
\item The formula obtained from the latter two steps is a positive Boolean combination of sentences and formulae in \BSNF. We use a procedure from \cite{SchB99} to turn this positive Boolean combination into a single equivalent formula in \BSNF.
\end{enumerate}

\noindent In the remainder of this section, $\sigma$ will always denote a relational signature.

\subsection{(Positive) Hanf Normal Form}
\label{section:positive-hnf}
In this section, we recall the notion of \emph{Hanf normal form} (\HNF) from \cite{HKS16-LICS} and introduce its syntactical restriction to \emph{positve Hanf normal form ($\HNF_+$)}. 

A \emph{threshold-counting-sentence} has the shape
\begin{equation*}
  \exists^{\geq k}y\,\sph_{\tau}(y)
\end{equation*}
where $k \in\Npos$ and, for some $r\in\NN$, $\tau$ is an $r$-type with one centre. A $\sigma$-structure $\A$ satisfies $\exists^{\geq k}y\,\sph_{\tau}(y)$ if, and only if, there are $\geq k$ pairwise distinct elements $a$ in $A$ that realise $\tau$. 

An $\FO[\sigma]$-formula is in \emph{Hanf normal form} (for short: \HNF) if it is a Boolean combination of type-formulae and threshold-counting-sentences. Its \emph{locality radius} is the maximum radius of all its type-formulae.

\begin{theorem}[\cite{BolligKuske2012, HKS13-LICS, HKS16-LICS}]
  \label{thm:fo-to-hnf}
  There is an algorithm which, on input of a degree bound $d\in\NN$ with $d\geq 2$, a relational signature $\sigma$, and a formula $\phi(\ov{x})$ with quantifier rank $q\in\NN$ from~$\FO[\sigma]$, computes a formula $\phi^H(\ov{x})$ in \HNF\ that is $d$-equivalent to $\phi(\ov{x})$ and that has locality radius~$\leq 4^q$. The algorithm runs in time
  \begin{equation*}
    2^{{(\size{\phi}\cdot\nu_d(4^q))}^{\lin{\size{\sigma}}}}.
  \end{equation*}
\end{theorem}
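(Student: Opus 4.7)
The plan is to prove Theorem~\ref{thm:fo-to-hnf} by induction on the quantifier rank $q$ of $\phi$. For $q=0$, the formula is a Boolean combination of atoms $R(x_{i_1},\dotsc,x_{i_k})$ and $x_i{=}x_j$; each such atom is equivalent to a disjunction of $0$-type-formulae $\sph_\tau(\ov{x})$ with $\tau$ a $d$-bounded $0$-type having (some of) the $x_i$ as centres, and the whole formula is already in \HNF\ of locality radius $0\leq 4^0$. Boolean connectives preserve \HNF. Hence the substantive case is existential quantification.

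For the inductive step consider $\phi(\ov{x}) = \exists y\,\psi(\ov{x},y)$. By the inductive hypothesis, I first compute an \HNF\ formula $\psi^H(\ov{x},y)$, $d$-equivalent to $\psi$, with locality radius $r\leq 4^{q-1}$. I normalise $\psi^H$ into a disjunction $\Oder_i \bigl(\sph_{\tau_i}(\ov{x},y) \und \chi_i\bigr)$ by intersecting each disjunct of its disjunctive normal form with the (complete) type-formula for every $d$-bounded $r$-type $\tau\in\mathfrak{T}_r^{\sigma,d}(n{+}1)$ of $(\ov{x},y)$ provided by Lemma~\ref{lem:compute-representatives}; this eliminates the type-formulae over proper subsets of variables by conjoining them with a single complete one. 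Each $\chi_i$ is a Boolean combination of threshold-counting-sentences (so has no free variables), whence $\exists y\,(\sph_{\tau_i} \und \chi_i) \equiv \chi_i \und \exists y\,\sph_{\tau_i}(\ov{x},y)$, and the remaining task is to express $\exists y\,\sph_\tau(\ov{x},y)$ in \HNF\ for every $\tau\in\mathfrak{T}_r^{\sigma,d}(n{+}1)$.

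To handle $\exists y\,\sph_\tau(\ov{x},y)$, I split the witness $y$ according to whether $\dist^\A(y,\ov{x})\leq 2r{+}1$ or not. In the \emph{near} case, $y$ lies inside $N_{2r+1}^\A(\ov{x})$, so the $r$-neighbourhood of $(\ov{x},y)$ is contained in $N_{3r+1}^\A(\ov{x})$; the near case is therefore a disjunction, over all $(3r{+}1)$-types $\tau'\in\mathfrak{T}_{3r+1}^{\sigma,d}(n)$ of $\ov{x}$ that contain an element at distance $\leq 2r{+}1$ from $\ov{x}$ whose extension as a new centre yields a type isomorphic to $\tau$, of the type-formulae $\sph_{\tau'}(\ov{x})$. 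In the \emph{far} case, $N_r^\A(y)\cap N_r^\A(\ov{x})=\emptyset$, so the $r$-type $\tau$ of $(\ov{x},y)$ decomposes as the disjoint union of an $r$-type $\tau_{\ov{x}}$ with $n$ centres and a $1$-centred $r$-type $\tau_y$; hence ``some far $y$ realising $\sph_\tau$ exists'' is equivalent to $\sph_{\tau_{\ov{x}}}(\ov{x})$ conjoined with ``at least one element realising $\tau_y$ lies outside $N_{2r+1}^\A(\ov{x})$''. Using Lemma~\ref{lem:types}(1), this latter condition is expressible by the threshold-counting-sentence $\exists^{\geq n\cdot\nu_d(2r+1)+1} y\,\sph_{\tau_y}(y)$, which is the Hanf-style decoupling trick.

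Combining both cases yields an \HNF\ formula for $\phi$ of locality radius at most $3r{+}1$; since $r\leq 4^{q-1}$ and $3\cdot 4^{q-1}+1\leq 4^q$ for $q\geq 1$, the $4^q$ bound propagates. The runtime analysis is driven by two sources: enumerating $|\mathfrak{T}_{4^q}^{\sigma,d}(n{+}1)|\leq 2^{(\size{\phi}\cdot\nu_d(4^q))^{\lin{\size{\sigma}}}}$ type-representatives via Lemma~\ref{lem:compute-representatives}, and converting (sub)formulae to normal form; a careful amortised analysis shows that the overall runtime is dominated by a single invocation at the outermost quantifier with radius $4^q$, giving the stated $2^{(\size{\phi}\cdot\nu_d(4^q))^{\lin{\size{\sigma}}}}$ bound. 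The main obstacle is controlling the locality radius: a careless bookkeeping of the near case (which needs radius $3r{+}1$) compounded across $q$ quantifier levels would give a much larger tower, so one has to exploit that the factor-$3$ growth only happens once per level and is absorbed by the factor-$4$ exponent base.
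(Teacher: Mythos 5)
First, a point of order: the paper does not prove Theorem~\ref{thm:fo-to-hnf} at all --- it is imported from \cite{BolligKuske2012, HKS13-LICS, HKS16-LICS} and used as a black box --- so your attempt can only be compared with the standard construction in those references. Your outline does follow that standard route (induction on quantifier rank, refinement of each disjunct into a complete type of $(\ov{x},y)$ plus a Boolean combination of counting-sentences, and a near/far case split for the existential witness), and the base case, the normalisation step, and the near case are essentially right.

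The far case, however, contains a genuine error. The condition ``at least one element realising $\tau_y$ lies outside $N_{2r+1}^{\A}(\ov{x})$'' is \emph{not} equivalent to $\exists^{\geq n\cdot\nu_d(2r+1)+1}y\,\sph_{\tau_y}(y)$: the counting-sentence is sufficient (by pigeonhole, since $|N_{2r+1}^{\A}(\ov{x})|\leq n\cdot\nu_d(2r{+}1)$ by Lemma~\ref{lem:types}) but not necessary. If the structure contains, say, exactly one realiser of $\tau_y$ and it lies far from $\ov{x}$, then a far witness exists while your counting-sentence fails, so the formula you build is not $d$-equivalent to $\exists y\,\sph_{\tau}(\ov{x},y)$. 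The standard repair is to note that the \emph{exact} number $m$ of realisers of $\tau_y$ inside $N_{2r+1}^{\A}(\ov{x})$ is determined by the $(3r{+}1)$-type of $\ov{x}$ (the $r$-ball of any such realiser is contained in $N_{3r+1}^{\A}(\ov{x})$), and to take the disjunction, over all $(3r{+}1)$-types $\tau'$ of $\ov{x}$, of $\sph_{\tau'}(\ov{x})\wedge\exists^{\geq m_{\tau'}+1}y\,\sph_{\tau_y}(y)$ with $m_{\tau'}$ read off from $\tau'$; this meshes with the case distinction you already make for the near case. I would also caution that your runtime paragraph (``a careful amortised analysis shows\dots'') is asserting rather than proving the hard part: a naive induction re-enumerates types at every quantifier level, and keeping the total cost at a single exponential of $(\size{\phi}\cdot\nu_d(4^q))^{\lin{\size{\sigma}}}$ is precisely the technical contribution of \cite{HKS13-LICS}, not a consequence of the sketch as written.
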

In the following, we also consider \emph{exact-counting-sentences} of the shape $\exists^{=k}y\,\sph_{\tau}(y)$ for arbitrary $k \in \NN$. 
We will subsume exact-counting-sentences and threshold-counting-sentences under the name \emph{counting-sentences}. The reason for introducing exact-counting-sentences is that in the notion of positive Hanf normal form, introduced in the following, negations are only allowed inside of type-formulae and counting-sentences, but not in the Boolean combination that connects these.

A formula is in \emph{positive Hanf normal form} (for short: $\HNF_+$) if it is a \emph{positive} Boolean combination (i.e., a Boolean combination that only uses the connectives $\land$ and $\lor$) of type-formulae and counting-sentences. $\HNF_+$ can be obtained from $\HNF$:

\begin{lemma}
  \label{lem:fo-to-hnfp}
  There is an algorithm which, on input of a degree bound $d\in\NN$ with $d\geq 2$, a relational signature $\sigma$, and a formula $\phi(\ov{x})$ with quantifier rank $q\in\NN$ from $\FO[\sigma]$, computes a formula $\phi^H_+(\ov{x})$ in $\HNF_+$ that is $d$-equivalent to $\phi(\ov{x})$ and that has locality radius $\leq 4^q$. 
  The algorithm runs in time
  \begin{equation*}
    2^{(\size{\phi}\cdot\nu_d(4^q))^{\lin{\size{\sigma}}}}.
  \end{equation*}
\end{lemma}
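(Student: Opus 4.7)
The plan is to obtain $\phi^H_+$ from the $\HNF$-formula $\phi^H$ produced by Theorem~\ref{thm:fo-to-hnf} by pushing negations inwards using De~Morgan's laws until they occur only in front of the atomic building blocks -- type-formulae and threshold-counting-sentences -- and then rewriting each such negated atom as an equivalent positive Boolean combination of type-formulae and (exact-)counting-sentences. Since equivalence is preserved at every step and the locality radius is never increased, the resulting $\HNF_+$-formula $\phi^H_+(\ov{x})$ will be $d$-equivalent to $\phi(\ov{x})$ and have locality radius $\leq 4^q$.

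First, I would invoke Theorem~\ref{thm:fo-to-hnf} to compute $\phi^H(\ov x)$ in the stated time and convert it to negation normal form by standard De~Morgan transformations; this at most doubles the size and can be carried out in linear time. Let $r \leq 4^q$ be its locality radius. The negated atoms then come in two kinds. For a negated type-formula $\nicht\,\sph_\tau(\ov x)$ with $\tau$ an $r$-type with $n' \leq n$ centres over $\sigma$, I would use Lemma~\ref{lem:compute-representatives} to compute the set $\mathfrak{T}_r^{\sigma,d}(n')$ of representatives of all $d$-bounded $r$-types with $n'$ centres and replace $\nicht\,\sph_\tau(\ov x)$ by
\begin{equation*}
  \Oder_{\tau' \in \mathfrak{T}_r^{\sigma,d}(n'),\ \tau' \not\cong \tau} \sph_{\tau'}(\ov x),
\end{equation*}
which is $d$-equivalent because in every $d$-bounded structure each tuple $\ov a$ realises exactly one representative. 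For a negated threshold-counting-sentence $\nicht\,\exists^{\geq k}y\,\sph_\tau(y)$ I would use the (semantically evident) equivalence with $\Oder_{i=0}^{k-1}\exists^{=i}y\,\sph_\tau(y)$ and substitute the latter disjunction.

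For the complexity, the bound on $|\mathfrak{T}_r^{\sigma,d}(n')|$ and on the time to compute each representative and each type-formula is, by Lemma~\ref{lem:compute-representatives} and the discussion after Lemma~\ref{lem:types}, at most $2^{(n\cdot\nu_d(4^q))^{\lin{\size{\sigma}}}}$ -- well within the target bound. Each replacement for a negated threshold-counting-sentence involves $k$ exact-counting-sentences, but $k$ is already bounded by the size of $\phi^H$ (which enters the target bound), and each exact-counting-sentence has size $\lin{k^2 + \size{\sph_\tau}}$. Since the number of negated atoms in $\phi^H$ is at most $\size{\phi^H}$, summing over all replacements gives an $\HNF_+$-formula and a total runtime of $2^{(\size{\phi}\cdot\nu_d(4^q))^{\lin{\size{\sigma}}}}$, matching the bound.

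The main (and only real) obstacle is bookkeeping: verifying that the enumeration of replacement type-formulae for each negated $\sph_\tau(\ov x)$ -- which is the only potentially expensive substitution -- is absorbed by the bound already established for the $\HNF$-computation in Theorem~\ref{thm:fo-to-hnf}. Since Lemma~\ref{lem:compute-representatives} gives the same asymptotic bound $2^{(n\cdot\nu_d(r))^{\lin{\size{\sigma}}}}$ for the number of representatives and for constructing them, and since each type-formula $\sph_{\tau'}$ has size $(n\cdot\nu_d(r))^{\lin{\size{\sigma}}}$, multiplying by the at most $\size{\phi^H}$ negated atoms stays within the claimed runtime.
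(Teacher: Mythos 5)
Your proposal is correct and follows essentially the same route as the paper: apply Theorem~\ref{thm:fo-to-hnf}, push negations inward by De~Morgan, replace each negated type-formula by the disjunction over all non-isomorphic representatives from $\mathfrak{T}_r^{\sigma,d}(n')$ and each negated threshold-counting-sentence by the disjunction $\Oder_{i=0}^{k-1}\exists^{=i}y\,\sph_\tau(y)$, with the same complexity accounting via Lemma~\ref{lem:compute-representatives}. No gaps.
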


\begin{proof}
  On input of a degree bound $d\in\NN$ with $d\geq 2$, a relational signature $\sigma$, and a formula $\phi(\ov{x})$ from $\FO[\sigma]$ with quantifier rank $q\in\NN$ and $n\in\NN$ free variables, the algorithm proceeds as follows:
  \begin{enumerate}
  \item\label{lem:fo-to-hnfp:step1}
    Using the algorithm described in Theorem~\ref{thm:fo-to-hnf}, $\phi(\ov{x})$ is turned into a $d$-equivalent formula~$\phi^H(\ov{x})$ in \HNF\ that has
    locality radius $\leq 4^q$. 
  \item\label{lem:fo-to-hnfp:step2}
    Using de Morgan's law and the elemination of double negations, $\phi^H(\ov{x})$ is turned into a Boolean combination of type-formulae and threshold-counting-sentences whose negations only occur directly in front of a threshold-counting-sentence or a type-formula.
  \item\label{lem:fo-to-hnfp:step3}
    In the formula just constructed, we replace each negated type-formula by a $d$-equivalent positive Boolean combination of type-formulae and each negated threshold-counting-sentence by an equivalent positive Boolean combination of exact-counting-sentences:
    \begin{itemize}
    \item Each sub-formula of the shape $\neg\exists^{\geq k}y\, \sph_{\tau}(y)$ is equivalently replaced by the disjunction of all exact-counting-sentences $\exists^{=i}y\,\sph_{\tau}(y)$ for all $i \in [0,k{-}1]$. 
    \item\label{lem:fo-to-hnfp:step4}
      Consider a sub-formula of the shape $\neg\sph_{\rho}(\ov{x}')$,
      where, for an $r\in\NN$, $\rho$ is an $r$-type with $n'\in [1,n]$ centres and  $\ov{x}'$ is a sub-tuple of $\ov{x}$ of length $n'$. This formula is $d$\hbox{-}equivalent to the disjunction of all type-formula $\sph_{\rho'}(\ov{x}')$ for all $\rho' \in \mathfrak{T}^{\sigma,d}_r(n')$ with $\rho'\not\cong\rho$ (recall from Lemma~\ref{lem:compute-representatives} that $\mathfrak{T}^{\sigma,d}_r(n')$ is a set of representatives of the isomorphism classes of all $d$-bounded $r'$-types with $n'$ centres over~$\sigma$). 
    \end{itemize}
  \end{enumerate}
  Clearly, the resulting formula also has locality radius $\leq 4^q$. The time complexity of the algorithm is determined by the upper bounds provided by Theorem~\ref{thm:fo-to-hnf} and Lemma~\ref{lem:compute-representatives}. A detailed analysis is deferred to  Appendix~\ref{appendix:upper-bounds:positive-hanf-normal-form}. 
\end{proof}
In the following two sections, we will describe how counting-sentences and type-formulae can be turned into \BSNF-formulae. 

\subsection{From Counting-Sentences to BSNF}
\label{section:counting-sentences-to-BSNF}

In this section, we show that every counting-sentence $\chi$ can be turned into a \BSNF-sentence that is equivalent to $\chi$ -- not only on a class of structures of bounded degree, but on the class of all structures.

\begin{lemma}\label{lem:counting-sentences-to-BSNF}
  There is an algorithm which, on input of a counting-sentence $\chi$ computes an equivalent sentence in \BSNF\ in time $\lin{\size{\chi}}$. 
\end{lemma}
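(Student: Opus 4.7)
My plan is to output, for each shape of counting-sentence, an explicit \BSNF-sentence that is provably equivalent and of linear size. For a threshold-counting-sentence $\chi = \exists^{\geq k}y\,\sph_\tau(y)$ with $k \geq 1$ and $\tau$ an $r$-type with one centre, I would produce
\[
  \exists y_1 \cdots \exists y_k \, \forall z\, \Bigl(\, \Und_{1\leq i < j \leq k}\! y_i \neq y_j \,\und\, \Und_{1\leq i\leq k}\bigl(z{=}y_i \impl \sph_\tau(z)\bigr)\Bigr).
\]
For an exact-counting-sentence $\chi = \exists^{=k}y\,\sph_\tau(y)$ with $k \geq 1$, I would additionally conjoin the at-most-$k$ clause $\sph_\tau(z) \impl \Oder_{1\leq i \leq k} z{=}y_i$ inside the same $\forall z$; for $k=0$, $\chi$ is equivalent to $\forall z\,\nicht\,\sph_\tau(z)$, which is already in \BSNF\ with zero existential quantifiers. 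The bound variables of $\sph_\tau$ would be renamed, if necessary, to avoid clashes with $z$.

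Correctness is immediate: the universal $z$ is only ``active'' when it coincides with some $y_i$, forcing each $y_i$ to realise $\tau$; combined with pairwise distinctness, the first formula expresses ``there exist $k$ distinct $\tau$-realisers'', i.e., $\exists^{\geq k}y\,\sph_\tau(y)$. The extra clause in the exact case asserts that every $\tau$-realiser lies in $\{y_1,\dotsc,y_k\}$, so together they enumerate all such elements.

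The non-trivial point is the \BSNF\ requirement that the body under $\forall z$ be $r$-local around $z$. The atoms $y_i \neq y_j$ and $z{=}y_i$ carry no quantifiers and thus are trivially $r$-local around $z$. The formula $\sph_\tau(z)$ is $r$-local around $z$ by the construction recalled on page~\pageref{expr:sphere-formula} (its quantifiers are guarded by $\dist(\cdot, z)\leq r$). Since conjunctions and disjunctions preserve $r$-locality around a fixed variable, the body is $r$-local around $z$ as required.

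The resulting formula has size $\lin{k^2 + \size{\sph_\tau}} = \lin{\size{\chi}}$: the pairwise-inequality block contributes $\lin{k^2}$, and $\sph_\tau$ appears at most twice. The algorithm merely writes the pieces out and hence runs in time $\lin{\size{\chi}}$. The closest thing to an obstacle is noticing that we need not re-relativise any quantifiers---since $\sph_\tau$ is already $r$-local around its free variable, simply instantiating this variable to $z$ carries the locality over.
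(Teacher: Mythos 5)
Your construction is essentially the paper's: existentially quantify $k$ pairwise-distinct witnesses and let the universally quantified $z$ act only at those witnesses (plus, in the exact case, the converse inclusion $\sph_\tau(z)\impl\Oder_{1\leq i\leq k}z{=}y_i$), with $r$-locality around $z$ inherited from $\sph_\tau(z)$. The one discrepancy is that your threshold formula uses the guard $\Und_{1\leq i\leq k}\bigl(z{=}y_i\impl\sph_\tau(z)\bigr)$, which contains $k$ copies of $\sph_\tau$ and hence has size $\Theta(k^2+k\cdot\size{\sph_\tau})$ rather than the claimed $\lin{k^2+\size{\sph_\tau}}=\lin{\size{\chi}}$; your own size analysis (``$\sph_\tau$ appears at most twice'') tacitly assumes the single-occurrence form $\bigl(\Oder_{1\leq i\leq k}z{=}y_i\bigr)\impl\sph_\tau(z)$, which is what the paper writes and what you should use to actually obtain the linear time and size bound.
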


\begin{proof}
  The algorithm distinguishs on the possible shapes of the counting-sentence $\chi$. In each case, it is easy to verify that the provided \BSNF-sentence is indeed equivalent to $\chi$.
  \begin{itemize}
  \item A threshold-counting-sentence $\exists^{\geq k}y\,\sph_{\tau}(y)$ with $k\in\Npos$ is equivalent to the \BSNF-sentence
    \begin{equation*}
      \exists y_1 \cdots \exists y_k \forall z
      \left(
        \Und_{1\leq i < j \leq k} \neg y_i{=}y_i 
        \ \und 
        \left(
          \left(
        \Oder_{1\leq i\leq k}
          z{=}y_i
          \right) \ \impl \ \sph_{\tau}(z)
        \right)
      \right).
    \end{equation*}
  \item An exact-counting-sentence $\exists^{=k}y\,\sph_{\tau}(y)$ with $k\in\Npos$ is equivalent to the \BSNF-sentence
    \begin{equation*}
      \exists y_1 \cdots \exists y_k \forall z
      \left(
        \Und_{1\leq i < j \leq k} \neg y_i{=}y_j
        \ \und \ 
        \left(
          \sph_{\tau}(z) \ \gdw  
          \Oder_{1\leq i\leq k} z{=}y_i
        \right)
      \right).
    \end{equation*}
  \item An exact-counting-sentence $\exists^{=0}y\,\sph_{\tau}(y)$ is equivalent to the \BSNF-sentence
    $\forall z\, \neg \sph_{\tau}(z)$.
  \end{itemize}
  The analysis of the time complexity of the algorithm boils down to an analysis of the size of the constructed \BSNF-sentence and is deferred to  Appendix~\ref{appendix:upper-bounds:from-counting-sentences-to-BSNF}. 
\end{proof}

\subsection{From Type-formulae to BSNF}
\label{section:type-formulae-to-BSNF}
Aim of this section is to turn type-formulae into $d$-equivalent formulae in \BSNF.

\begin{lemma}\label{lem:type-formulae-to-BSNF}
  There is an algorithm which, on input of a degree bound $d\in\NN$ with $d\geq 2$, a relational signature $\sigma$, and a type-formula $\alpha(\ov{x}) \isdef \sph_{\rho}(\ov{x})$, where, for some $r\in\NN$ and $n\in\Npos$, $\rho$ is an $r$-type with $n$ centres over $\sigma$, computes a formula $\alpha^B(\ov{x})$ in \BSNF\ that is $d$-equivalent to $\alpha(\ov{x})$ and that has locality radius $n(2r{+}1)$.
  The algorithm runs in time
  \begin{equation*}
    2^{\nu_d(n(2r{+}1))^{\lin{\size{\sigma}}}}.
  \end{equation*}
\end{lemma}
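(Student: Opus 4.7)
The plan is to construct $\alpha^B(\ov{x})$ explicitly by existentially naming every element of the alleged $r$-neighbourhood of $\ov{x}$, and then using the outer $\forall z$ of \BSNF{} to enforce that these pebbles cover $N_r^{\A}(\ov{x})$ exactly. Write $\rho = (\B, b_1, \dotsc, b_n)$ with $B = \set{e_1,\dotsc,e_N}$, where $N \leq n\cdot\nu_d(r)$ by Lemma~\ref{lem:types}(1), and choose $q_1, \dotsc, q_n\in[N]$ with $b_i = e_{q_i}$. Then set
\[
  \alpha^B(\ov{x}) \isdef \exists y_1\cdots\exists y_N\,\forall z\,\bigl(\chi_c \und \chi_{\neq} \und \chi_{\text{iso}} \und \chi_{\text{cov}}\bigr),
\]
where $\chi_c \isdef \Und_{i\in[n]} x_i{=}y_{q_i}$ fixes the centres, $\chi_{\neq} \isdef \Und_{j\ne j'} \nicht y_j{=}y_{j'}$ enforces distinctness, $\chi_{\text{iso}}$ is the quantifier-free conjunction $\psi(y_1,\dotsc,y_N)$ already appearing in the definition of the type-formula (listing every atomic and negated atomic $\sigma$-fact that holds, respectively fails, in $\rho$), and $\chi_{\text{cov}} \isdef \bigl(\Oder_{i\in[n]}\dist_{\leq r}(z, x_i)\bigr) \impl \Oder_{j\in[N]} z{=}y_j$.

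The inner matrix is $r$-local around $z$ (hence $n(2r{+}1)$-local, as claimed): $\chi_c, \chi_{\neq}, \chi_{\text{iso}}$ do not even mention $z$ and refer only to atomic $\sigma$-facts on $\ov{x}\cup\ov{y}$, which are preserved in any induced substructure that contains these elements; and $\chi_{\text{cov}}$ only uses the formulae $\dist_{\leq r}(z,x_i)$, whose shortest-path witnesses start at $z$ and therefore remain inside $N_r^{\A}(z)$. For the equivalence with $\sph_{\rho}(\ov{x})$: if $\N_r^{\A}(\ov{a})\cong\rho$ via an isomorphism $f\colon B\to N_r^{\A}(\ov{a})$, then $b_j \isdef f(e_j)$ clearly witnesses $\alpha^B(\ov{a})$. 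Conversely, suppose $b_1,\dotsc,b_N$ witness $\alpha^B(\ov{a})$; by $\chi_{\text{iso}}$ together with $\chi_{\neq}$, the map $e_j\mapsto b_j$ is an isomorphism of induced $\sigma$-substructures, and in particular of Gaifman graphs. Since in $\rho$ each $e_j$ is connected to some centre $b_i$ by a Gaifman path of length $\leq r$, the same path lifts into $\A$, so $\set{b_1,\dotsc,b_N}\subseteq N_r^{\A}(\ov{a})$, and $\chi_{\text{cov}}$ provides the reverse inclusion.

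Computing $\chi_{\text{iso}}$ amounts to writing down the atomic description of $\rho$, which by Lemma~\ref{lem:types}(3) takes time $(n\cdot\nu_d(r))^{\lin{\size{\sigma}}}$; the remaining conjuncts contribute an extra $\lin{N^2}$ equalities and $n$ copies of the $\dist_{\leq r}$-formula of size $\lin{\size{\sigma}\log r}$, all comfortably dominated by the claimed runtime $2^{\nu_d(n(2r{+}1))^{\lin{\size{\sigma}}}}$. The subtlest step — where I expect the main obstacle to lie — is the converse direction of the equivalence: one must deduce $b_j \in N_r^{\A}(\ov{a})$ purely from the atomic information supplied by $\chi_{\text{iso}}$, without ever writing a distance conjunct on the pebbles themselves. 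It goes through because the Gaifman graph induced on $\set{b_1,\dotsc,b_N}$ is a subgraph of $G_{\A}$, so every short Gaifman path present in $\rho$ survives in $\A$; this same observation is also what allows the locality radius to remain as small as $n(2r{+}1)$ rather than having to accommodate the full diameter of $N_r^{\A}(\ov{x})$ inside $N_{r'}^{\A}(z)$.
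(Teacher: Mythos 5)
Your construction is correct relative to the paper's formal definition of ``$r$-local around $z$'', and it takes a genuinely different and shorter route than the paper's proof. The paper decomposes $\rho$ into its connected components $\rho_1,\dotsc,\rho_k$, builds for each $i$ a formula $\gamma_i$ that is $r_i$-local around the \emph{single} centre $x_{p_{i,1}}$ (with $r_i=\ell_i(2r{+}1)$, via Lemma~\ref{lem:types}) and that additionally checks that the centres of the other components stay at distance $\geq 2r{+}1$, and then assembles $\forall y\,\bigwedge_i\bigl(y{=}x_{p_{i,1}}\impl\gamma_i(\ov{x}_i)\bigr)$ --- a \BSNF-formula with no existential pebbles at all. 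You instead existentially name every element of the neighbourhood, push the entire isomorphism check into a quantifier-free condition on the pebbles, and leave only the covering condition under the $\forall z$. Your route yields a smaller locality radius ($r$ instead of $n(2r{+}1)$), equivalence on all structures rather than only $d$-bounded ones, and a much simpler correctness argument; both constructions fit comfortably within the stated time bound. The converse direction of your equivalence is argued correctly: the induced-substructure isomorphism supplied by $\chi_{\text{iso}}$ and $\chi_{\neq}$ transports the Gaifman paths of $\rho$ into $G_{\A}$, giving $\set{b_1,\dotsc,b_N}\subseteq N_r^{\A}(\ov{a})$, and $\chi_{\text{cov}}$ gives the reverse inclusion.

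The one point you should make explicit, because your whole locality argument rests on it: $\chi_c$, $\chi_{\neq}$ and $\chi_{\text{iso}}$ assert \emph{positive} atomic facts about pebbles that may lie arbitrarily far from $z$. This is admissible only because the paper's notion of ``$r$-local around $z$'' evaluates the matrix in the \emph{induced} substructure on $\set{\ov{a}}\cup\set{\ov{b}}\cup N_r^{\A}(c)$, which retains every pebble together with all relations among the pebbles. Under the stricter reading of Barthelmann and Schwentick's normal form --- where the matrix may depend only on the $r$-neighbourhood type of $z$ in the pebbled structure, and hence sees nothing of pebbles outside $N_r(z)$ beyond their absence --- these conjuncts would be inadmissible, and the paper's component decomposition is exactly the device that avoids them: each $\gamma_i$ fires only when $y$ coincides with a pebble, inspects only that pebble's component, and refers to the remaining pebbles only through a non-proximity condition. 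So your proof establishes the lemma as formalised here, but it is tied to this particular formalisation in a way the paper's construction is not.
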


\begin{proof}
  We describe the algorithm on input of a degree bound $d\in\NN$ with $d\geq 2$, a relational signature~$\sigma$, and a type-formula $\alpha(\ov{x}) \isdef \sph_{\rho}(\ov{x})$. Let $r\in\NN$ and $n\in\Npos$ be the radius and the number of centres of $\rho$, respectively. I.e., $\ov{x} = (x_1,\dotsc, x_n)$ and $\rho$ has the shape $(\A, a_1, \dotsc, a_n)$ for a $\sigma$-structure $\A$ and centres $a_1, \dotsc,a_n \in A$ such that $A = N_r^{\A}(a_1,\dotsc,a_n)$.

  Suppose that $\A_1, \dotsc, \A_k$, for a suitable $k\in\Npos$, are the connected components of $\A$. Then, each of the centres $a_1,\dotsc,a_n$ of $\rho$ belongs to precisely one of the structures $\A_1, \dotsc, \A_k$, and each of the structures $\A_1, \dotsc, \A_k$ contains at least one of these centres. 
  For each $i \in [k]$, 
  \begin{itemize}
  \item let $\ell_i \in [1,n]$ denote the number of centres among $a_1,\dotsc,a_n$ that belong to $\A_i$, and
  \item let $P_i \isdef \set{ p_{i,1}, \dotsc, p_{i,\ell_i}} \subseteq [n]$ with $|P_i|=\ell_i$ be the non-empty set of indices of all centres that belong to $\A_i$, i.e., such that $j \in P_i$ if, and only if, $a_j \in A_i$, and
  \item let $\rho_i \isdef (\A_i, a_{p_{i,1}}, \dotsc, a_{p_{i,\ell_i}})$.
  \end{itemize}
  Note that, as a consequence of Lemma~\ref{lem:types}, $A_i \subseteq N_{r_i}^{\A}(a_{p_{i,1}})$ for each $i\in[k]$ when choosing 
  \begin{equation*}
    r_i \ \isdef \quad \ell_i (2r{+}1) \ > \ r+(\ell_i{-}1)(2r{+}1).    
  \end{equation*}
  Furthermore, if $i,j\in[k]$ are distinct and $p \in P_i$, $q\in P_j$, then $a_p$ and $a_q$ have distance $\geq 2r{+}1$ in $\A$. I.e., $N_r^{\A}(a_p)$ and $N_r^{\A}(a_q)$ do not intersect and there are no edges in the Gaifman graph of $\A$ between elements of the two sets (see Figure~\ref{fig:type-formula-to-BSNF-1} for an illustration). 
 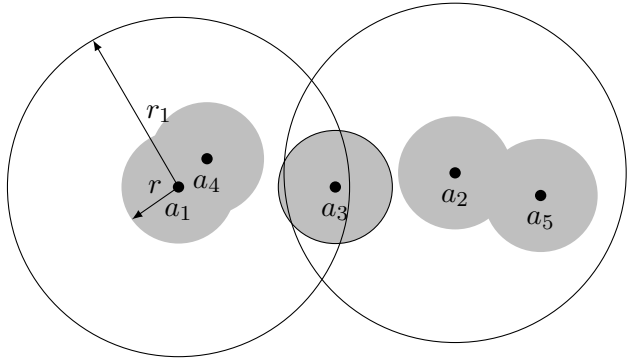
\begin{figure}
    \centering
    \begin{tikzpicture}[scale=0.75]
      \draw[clip,draw=none] (-4.1,-3.01) rectangle (7, 3.31);
      \coordinate (a1) at (-1, 0);
      \coordinate (a4) at (-0.5, 0.5);
      \coordinate (a3) at (1.75, 0);
      \coordinate (a2) at (3.85, 0.25);
      \coordinate (a5) at (5.35, -0.15);
      \draw[draw=none,fill=lightgray] (a1) circle (1);
      \draw[draw=none,fill=lightgray] (a2) circle (1);
      \draw[draw=none,fill=lightgray] (a3) circle (1);
      \draw[draw=none,fill=lightgray] (a4) circle (1);
      \draw[draw=none,fill=lightgray] (a5) circle (1);
      \draw[draw=black] (a1) circle (3);
      \draw[draw=black] (a2) circle (3);
      \draw[draw=black] (a3) circle (1);
      \node[fill=black, inner sep=1.5,circle, label=below:$a_1$] at (a1) {};
      \node[fill=black, inner sep=1.5,circle, label=below:$a_2$] at (a2) {};
      \node[fill=black, inner sep=1.5,circle, label=below:$a_3$] at (a3) {};
      \node[fill=black, inner sep=1.5,circle, label=below:$a_4$] at (a4) {};
      \node[fill=black, inner sep=1.5,circle, label=below:$a_5$] at (a5) {};
      \draw[-latex,black] (a1) -- +(-145:1) node[midway,above] { $ r $ };
      \draw[-latex,black] (a1) -- +(120:3) node[midway,right] { $ r_1$ };
    \end{tikzpicture}
    \caption{Example for the distribution of the centres of $\rho$ and connected components of $\A$ for $n = 5$, $k = 3$ and $P_1 = \set{1,4}$, $P_2 = \set{3}$, $P_3 = \set{2,5}$.}
    \label{fig:type-formula-to-BSNF-1}
  \end{figure}

  The following claim is the crucial step in constructing the desired \BSNF-formula. It shows that for each $i \in [k]$ there is a ``type-formula'' for the $r$-type $\rho_i$ which is $r_i$-local around the single variable $x_{p_{i,1}}$ (instead of being local around all its free variables) and which additionally verifies that the centres belonging to the other connected components of $\A$ are sufficiently far away.
  \begin{claim}
    \label{claim:type-formulae-to-BSNF}
    Let $ i \in [k]$. There is an $\FO[\sigma]$-formula $\gamma_i(\ov{x})$ which is
    $r_i$-local around the single free variable~$x_{p_{i,1}}$ such that
    the following holds:
    If $\C$ is a $d$-bounded $\sigma$-structure and $\ov{c} = (c_1, \dotsc, c_n) \in C^n$, then $\C \models \gamma_i[\ov{c}]$ if, and only if, the following conditions are satisfied:
    \begin{enumerate}
    \item\label{claim:type-formulae-to-BSNF:1} $\N_r^{\C}(c_{p_{i,1}}, \dotsc, c_{p_{i,\ell_i}}) \cong \rho_i$  and
    \item\label{claim:type-formulae-to-BSNF:2} for each $q \in [n]\setminus P_i$, the sets $N_r^{\C}(c_q)$ and $N_r^{\C}(c_{p_{i,1}}, \dotsc, c_{p_{i,\ell_i}})$ do not intersect and there are no edges in the Gaifman graph of $\C$ between elements of the two sets.
    \end{enumerate}
  \end{claim}
  Before proving Claim~\ref{claim:type-formulae-to-BSNF}, let us first show how these local formulae can be used to obtain the final \BSNF-formula $\alpha^B(\ov{x})$ that is $d$-equivalent to $\sph_{\rho}(\ov{x})$. 

  For this, let $y$ be a variable that is distinct to each of the variables in the tuple $\ov{x}$.
  For each $i \in [1,k]$, let $\ov{x}_i$  be the tuple of variables obtained from $\ov{x}$ by replacing the variable $x_{p_{i,1}}$ with $y$. Then, $\alpha_B(\ov{x})$ can be chosen as the formula
  \begin{equation*}
    \alpha^B(\ov{x}) \ \isdef \quad
    \forall y \Und_{i=1}^k 
    \Bigl(
      y{=}x_{p_{i,1}} \ \impl \ 
      \gamma_{i}(\ov{x}_i)
    \Bigr).
  \end{equation*}
  In particular, note that each of the formulae $\gamma_i(\ov{x}_i)$ for $i\in[k]$ is $r_i$-local around $y$ and thus, the whole universally quantified subformula of $\alpha_B(\ov{x})$ is $n(2r{+}1)$-local around $y$. 
  The $d$-equivalence of $\alpha^B(\ov{x}$) to $\sph_{\rho}(\ov{x})$ can easily be shown using the assumption on the shape of $\rho$ and the two conditions of Claim~\ref{claim:type-formulae-to-BSNF}.
  \smallskip

  \noindent\emph{We now turn to the proof of Claim~\ref{claim:type-formulae-to-BSNF}.} 
  Let $i \in [k]$ and let $\mathfrak{T}_i$ denote a set  of representatives of the isomorphism classes of all $d$-bounded $r_i$-types $\tau = (\B, b_1, \dotsc, b_{\ell_i})$ with  $\ell_i$ centres such that 
  $B \subseteq N_{r_i}^{\B}(b_1)$, i.e., all elements have distance at most $r_i$ from $b_1$, and  $\N_r^{\B}(b_1, \dotsc, b_{\ell_i}) \cong \rho_i$.
  \begin{claim}
    \label{claim:type-formulae-to-BSNF-2}
    For each $r_i$-type $\tau = (\B,b_1,\dotsc,b_{\ell_i}) \in \mathfrak{T}_i$, there is an $\FO[\sigma]$-formula $\gamma_{i,\tau}(x_1, \dotsc, x_n)$ that is 
    $r_i$-local around the free variable $x_{p_{i,1}}$ and 
    for each $d$-bounded $\sigma$-structure $\C$ and all $c_1,\dotsc,c_n \in C$, it holds that $\C \models \gamma_{i,\tau}[c_1,\dotsc,c_n]$ if, and only if, the following conditions hold:
    \begin{enumerate}
    \item $\N_r^{\C}(c_{p_{i,1}}, \dotsc, c_{p_{i,\ell_i}}) \cong \N_r^{\B}(b_1,\dotsc,b_{\ell_i}) \ ( \cong \rho_i)$
    \item Condition (2) of Claim~\ref{claim:type-formulae-to-BSNF} holds. I.e., for each element $c_q$ with $q \in [n]\setminus P_i$, the distance of $c_q$ to each of the elements $c_{p_{i,1}}, \dotsc, c_{p_{i,\ell_i}}$ is at least $2r{+}1$. 
    \end{enumerate}
  \end{claim}
  Observe that, by definition of the set $\mathfrak{T}_i$, for each $d$-bounded $\sigma$-structure $\C$ and every tuple $\ov{c} \in C^n$, $\N_r^{\C}(c_{p_{i,1}}, \dotsc, c_{p_{i, \ell_i}})\cong\rho_i$ if, and only if, there is an $r_i$-type $\tau \in \mathfrak{T}_i$ such that $\N_r^{\C}(c_{p_{i,1}},\dotsc,c_{p_{i,\ell_i}})\cong\tau$. Thus, using the formulae provied by Claim~\ref{claim:type-formulae-to-BSNF-2}, 
  we can let $\gamma_i(\ov{x})$ be the disjunction of the formulae $\gamma_{i,\tau}(\ov{x})$ for all $\tau \in \mathfrak{T}_i$. 
  In particular, $\gamma_i(\ov{x})$ is $r_i$-local around~$x_{p_{i,1}}$ since already each of the formulae $\gamma_{i,\tau}(\ov{x})$ for $i\in[k]$ is $r_i$-local around $x_{p_{i,1}}$.

  \smallskip

  \noindent\emph{For the proof of Claim~\ref{claim:type-formulae-to-BSNF-2},} 
  consider an $r_i$-type $\tau = (\B, b_1, \dotsc, b_{\ell_i})$ from  $\mathfrak{T}_i$.
  The formula $\gamma_{i,\tau}(\ov{x})$ extends the formula $\sph_{\tau}(x_{p_{i,1}},\dotsc,x_{p_{i,\ell_i}})$ from Page~\pageref{expr:sphere-formula}.
  Suppose that $ B = \set{e_1,\dotsc,e_N}$ for a suitable $N \in \Npos$.  Let $\psi(z_1, \dotsc, z_N)$ be a conjunction of all atomic and negated atomic formulae $\phi(z_1, \dotsc, z_N)$ over the signature $\sigma$ such that $\B \models \phi[e_1,\dotsc,e_N]$. Furthermore, choose $q_1, \dotsc, q_{\ell_i} \in [N]$ such that $b_j = e_{q_j}$ for all $j \in [\ell_i]$. With this, we define $\gamma_{i,\tau}(\ov{x})$ as
  \begin{equation*}
    \begin{split}
      \exists z_1 \cdots \exists z_N
      \Biggl( \ \ \
      & \Und_{\mathclap{1\leq j\leq N}} \dist(x_{p_{i,1}}, z_j) \leq r_i \ \ \und \ \ \forall z \Bigl( \dist(x_{p_{i,1}}, z) \leq r_i \ \ \impl \ \ \Oder_{\mathclap{1\leq j\leq N}} z{=}z_j \Bigr) \\
      & \und \ \ \ \Und_{\mathclap{1\leq j \leq \ell_i}} x_j{=}z_{q_j} \ \ \und \ \ \psi(z_1, \dotsc, z_N) \\
      & \und \ \ \ \Und_{\mathclap{1\leq j \leq \ell_i}} \forall z 
      \Bigl(
      \dist(z_{q_j}, z) \leq 2r{+}1 \ \ \impl \ \ \neg
      \Oder_{\mathclap{m \in [n]\setminus P_i}} z{=}x_m
      \Bigr)
      \Biggr).
    \end{split}
  \end{equation*}
  The first line replaces the first line of the formula $\sph_{\tau}$ (recall that all of $\tau$ is contained in the $r_i$-neighbourhood of its first center $b_1$) and ensures that $\gamma_{i,\tau}$ is $r_i$-local around $x_{p_{i,1}}$. Using line one, the second line ensures that an interpretation of the free variables $x_{p_{i,1}}, \dotsc, x_{p_{i,\ell_i}}$ realises the $r_i$-type $\tau$ and thus, by choice of $\tau$, $\gamma_{i,\tau}(\ov{x})$ satisfies Condition (1) of Claim~\ref{claim:type-formulae-to-BSNF-2}. The third line finally ensures that $\gamma_{i,\tau}(\ov{x})$ satisfies Condition (2) of Claim~\ref{claim:type-formulae-to-BSNF-2}. 
  As $2r{+}1 \leq r_i$, the formula $\gamma_{i,\tau}(\ov{x})$ is in particular $r_i$-local around $x_{p_{i,1}}$. This completes the proof of Claim~\ref{claim:type-formulae-to-BSNF-2} and also the proof of Claim~\ref{claim:type-formulae-to-BSNF}.
  \smallskip

  The time complexity of the algorithm is determined by the construction of the sets $\mathfrak{T}_i$ according to Lemma~\ref{lem:compute-representatives}. A detailed analysis can be found in Appendix~\ref{appendix:upper-bounds:from-sphere-formulae-to-BSNF}.
\end{proof}

\subsection{Proof of Theorem~\ref{thm:bsnf-upper-bound}}
\label{section:bsnf-upper-bound}
This section is devoted to the proof of the Theorem~\ref{thm:bsnf-upper-bound}, which combines the results of Section~\ref{section:positive-hnf}, Section~\ref{section:counting-sentences-to-BSNF}, and Section~\ref{section:type-formulae-to-BSNF} and uses a technique from \cite{SchB99} to turn a positive Boolean combination of \BSNF-formulae into a single \BSNF-formula.

\begin{proof}[Proof of Theorem~\ref{thm:bsnf-upper-bound}]
  On input of a degree bound $d\geq 2$, a relational signature $\sigma$, and a formula $\phi(\ov{x})$ from $\FO[\sigma]$ with quantifier rank $q\geq 0$ and the $n\geq 0$ free variables $\ov{x}$, the algorithm proceeds as follows:

  \begin{enumerate}
  \item Use Lemma~\ref{lem:fo-to-hnfp}
    to turn $\phi(\ov{x})$ into a $d$-equivalent formula $\phi^H_+(\ov{x})$ in $\HNF_+$.
  \item
    Use Lemma~\ref{lem:counting-sentences-to-BSNF} to replace the counting-sentences in $\phi^H_+(\ov{x})$ by equivalent \BSNF-formulae.
  \item 
    Use Lemma~\ref{lem:type-formulae-to-BSNF} to replace the type-formulae in $\phi^H_+(\ov{x})$ by $d$-equivalent \BSNF-formulae.
  \item Let $\psi(\ov{x})$ the positive Boolean combination of \BSNF-formulae, obtained in the last steps. An argument from \cite{SchB99} uses an induction over the Boolean combination to transform~$\psi(\ov{x})$ into a single \BSNF-formula:
    Suppose that $\psi(\ov{x})$ has the shape
    \begin{equation*}
      \underbrace{
        \exists y_1 \cdots \exists y_{m} \forall z \phi(\ov{x}, y_1, \dotsc, y_{m}, z)}_{\text{\BSNF-formula}}
      \quad \star \quad
      \underbrace{
        \exists y_1' \cdots \exists y'_{m'} \forall z' \phi'(\ov{x}, y'_1, \dotsc, y'_{m'}, z')}_{\text{\BSNF-formula}},
    \end{equation*}
    where $\star \in \set{\und,\oder}$ is a Boolean connective. We proceed along the following case distinction:
    \begin{itemize}
    \item If $\star = \und$, then $\psi(\ov{x})$ is equivalent to the \BSNF-formula
      \begin{equation*}
        \exists y_1 \cdots \exists y_m 
        \exists y_1' \cdots \exists y'_{m'}
        \forall z 
        \Bigl( \phi(\ov{x}, y_1, \dotsc, y_m, z) \ \und \ \phi'(\ov{x}, y'_1, \dotsc, y'_{m'}, z) \Bigr).
      \end{equation*}
    \item If $\star = \oder$, then $\psi(\ov{x})$ is equivalent to the \BSNF-formula
      \begin{equation*}
        \begin{split}
          \exists y \exists y'
          \exists y_1 \cdots \exists y_m
          \exists y'_1 \cdots \exists y'_{m'}
          \forall z
          \Bigl( &
          \bigl(
          y{=}y' \ \und \ \phi(\ov{x}, y_1, \dotsc, y_m, z)
          \bigr) \\
          & \ \oder \
          \bigl(
          \neg y{=}y' \ \und \ \phi'(\ov{x}, y'_1, \dotsc, y'_{m'}, z)
          \bigr)
          \Bigr).          
        \end{split}
      \end{equation*}
    \end{itemize}
  \end{enumerate}
  The time complexity of the described algorithm is largely determined by the time complexity of the algorithms of the employed lemmata. A detailed analysis is deferred to Appendix~\ref{appendix:upper-bounds:proof-of-main-theorem}.
\end{proof}
After proving our upper bounds on the construction of \BSNF\ on classes of structures of bounded degree, the subsequent Section~\ref{section:lower-bounds} is devoted to lower bounds. In particular, we will show that our upper bounds are worst-case optimal.
\section{Lower Bounds}
\label{section:lower-bounds}

In this section, we provide lower bounds 
on the size
of \BSNF-formulae for various classes $\CC$ of structures. 
To this aim, we present slow-growing sequences of formulae $\phi$ over suitable signatures with corresponding lower bounds on the size of $\CC$-equivalent \BSNF-formulae.
Note that similar methods as used here go back to \cite{StockmeyerMeyer1973} and were also
applied in \cite{FrickGrohe-FO-MSO-revisited, DBLP:conf/lics/PanV06} for lower bounds in parameterised complexity theory,
in \cite{grohe_schweikardt_2004, DBLP:journals/lmcs/GroheS05} for lower bounds on the succinctness of logics, and in \cite{DGKS07,BolligKuske2012,HKS13-LICS,DBLP:journals/corr/HarwathHS15} to obtain lower bounds on the size of Gaifman normal form, Hanf normal form,  Feferman-Vaught decompositions, and existential(-positive) sentences.

\subsection{Lower Bounds for Structures of Unbounded Degree}

This section's aim is the proof of the following lower bound on the size of \BSNF-sentences on classes $\FF_{\sle h}$ of forests of unbounded degree with height $\leq h$.
%Throughout this section, we let $E$ denote a binary relation symbol.

\begin{theorem} \label{Thm:lowerBound_non-elem}
  There is a sequence $(\phi_h)_{h\geq 1}$ of $\FO[\set{E}]$-sentences of size $\lin{h}$ such that for each $h\geq 1$, every sentence in \BSNF\ that is equivalent to $\phi_h$ on $\FF_{\sle h}$ has size $> \Tower(h)$.
\end{theorem}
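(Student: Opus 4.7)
The plan is to adapt the non-elementary lower bound technique of Dawar, Grohe, Kreutzer and Schweikardt \cite{DGKS07} (used there for Gaifman normal form on finite trees) to the \BSNF\ setting. The core idea: for each $h\geq 1$, exhibit a pair of forests $\F_h, \F_h' \in \FF_{\sle h}$ that is distinguished by a small sentence $\phi_h$ of size $\lin{h}$, but cannot be distinguished by any \BSNF-sentence of size $\leq \Tower(h)$.

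First, I would define, inductively over $h$, a family of \emph{encoding trees} $T_h^a$ of height $h$ indexed by $a \in [0, \Tower(h)-1]$, so that trees with distinct indices are pairwise non-isomorphic. The recursion sets $T_0$ to a single node, and $T_{h+1}^a$ to a new root carrying one child-subtree $T_h^b$ for each bit position $b$ where the binary expansion of $a$ has a $1$. Since $\Tower(h+1) = 2^{\Tower(h)}$, this yields $\Tower(h+1)$ pairwise non-isomorphic height-$(h{+}1)$ trees. The sentence $\phi_h$ would express a property of ``shape distinguishability'' among height-$h$ subtrees -- captured by a recursive descent through the $h$ levels via depth-relativized quantification in the style of Stockmeyer-Meyer \cite{StockmeyerMeyer1973}, yielding overall size $\lin{h}$.

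Next, I would construct $\F_h$ and $\F_h'$ as carefully chosen disjoint unions of encoding trees so that $\F_h \models \phi_h$ but $\F_h' \not\models \phi_h$, while both forests simultaneously contain many identical copies of every intermediate tree type -- enough to guarantee that after placing any $n \leq \Tower(h)$ pebbles, plenty of tree copies in both forests remain pebble-free. For the lower bound argument proper, assume for contradiction that $\psi = \exists y_1 \cdots \exists y_n \forall z\, \chi(\ov{y}, z)$ is a \BSNF-sentence of size $\leq \Tower(h)$ equivalent to $\phi_h$ on $\FF_{\sle h}$. Then both $n$ and the locality radius $r$ of $\chi$ are bounded by $\Tower(h)$. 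I would then run an Ehrenfeucht-Fra\"iss\'e-style duplicator argument: for every existential placement $\ov{b}$ in $\F_h$, I produce a matching placement $\ov{b}'$ in $\F_h'$ by mapping each $b_i$ into an isomorphic copy of the tree it inhabits; by choice of the multiplicities this is always possible. Then, for each universally chosen $z \in F_h'$, the $r$-neighborhood of $z$ together with the pebbles within distance $r$ is isomorphic to a corresponding $r$-neighborhood in $\F_h$, because the local difference between $\F_h$ and $\F_h'$ can be placed at distance $> r$ from $z$ (using the many untouched tree copies). This forces $\F_h \models \psi \iff \F_h' \models \psi$, contradicting $\psi \equiv \phi_h$.

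The main obstacle is twofold. First, constructing $\phi_h$ of size $\lin{h}$ that distinguishes $\F_h$ and $\F_h'$ requires a compact recursive description of subtree-isomorphism, together with a carefully engineered asymmetry in the forests that survives the recursion. Second, the duplicator argument has to handle both the existential pebbles \emph{and} the universal $z$ simultaneously: the number of isomorphic copies of each subtree type in $\F_h$ and $\F_h'$ must exceed $n+r$-many (so that enough copies remain pebble-free and far from all pebbles), and the distinguishing local difference between $\F_h$ and $\F_h'$ must be absorbable into a single component that can always be avoided when choosing the $r$-neighborhood around $z$. Making these quantitative bounds match the hypothesised $\size{\psi} \leq \Tower(h)$ is the technical heart of the proof.
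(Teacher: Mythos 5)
Your plan follows essentially the same route as the paper: the Flum--Grohe/DGKS tree encodings $\T(i)$, which give $\Tower(h)$ pairwise non-isomorphic trees of height $\leq h$, a Stockmeyer--Meyer-style comparison formula $\eq_h$ of size $\lin{h}$, and a locality-plus-counting argument showing that a \BSNF-sentence with $n \leq \Tower(h)$ pebbles cannot detect the removal of one duplicated component. The paper packages the final step more simply than your duplicator argument: $\phi_h$ states ``every component is isomorphic to another one'', one takes the forest containing exactly two copies of each encoding tree, and deletes one copy of a pair untouched by the pebbles --- since distinct components are at infinite distance, the locality radius plays no role and no multiplicity bookkeeping beyond two copies per type is needed.
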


\noindent To achieve this result, we will recall and use tree encodings of natural numbers 
described in~\cite{FlGr06,DGKS07}. 
This will allow us
to compare ``large'' numbers in these encodings by ``small'' $\FO$-formulae, 
that do not have ``small'' equivalent \BSNF-formulae.

For natural numbers $i,n$, we write $\bit(i,n)$ to denote the $i$-th bit in the binary representation of $n$. I.e., $\bit(i,n) \isdef 0$, if $\lfloor \frac{n}{2^i} \rfloor$ is even, and  $\bit(i,n) \isdef 1$ otherwise.
The \emph{tree encoding $\T(i)$ of an $i\in\NN$} is a tree over the signature $\set{E}$, defined inductively as follows: 
$\T(0)$ is the one-node tree and for $i \ge 1$, the tree $\T(i)$ is obtained by creating a new root and attaching to it all trees $\T(j)$ for all $j \in \NN$ such that $\bit(j,i)=1$.
 By induction it can easily be shown that for each $h \in \NN$, all tree encodings $\T(j)$ with $j<\Tower(h)$ have height $\leq h$. 

For $h \in \Npos$, we let $\TT^{\enc}_{\sle h}$ be the set of all tree encodings $\T(i)$ for all $i \in \NN$ with $i < \Tower(h)$, i.e., where $\T(i)$ has height $\leq h$.
Note that $|\TT^{\enc}_{\sle h}| = \Tower(h)$. 
By $\FF^{\enc}_{\leq h}$ we denote the class of all finite forests where every component is isomorphic to a tree encoding from $\TT^{\enc}_{\sle h}$. 
In particular, that means that a forest $\F\in\FF^{\enc}_{\leq h}$ may contain more than one component that is isomorphic to the same tree encoding from $\TT^{\enc}_{\sle h}$.

The following lemma from \cite{FlGr06, DGKS07} shows that tree encodings of numbers can be compared by “small” $\FO[\set{E}]$-formulae.

\begin{lemma}[{\cite[Lemma 10.21]{FlGr06}}]\label{lem:eq-formula} There is a number $c \in \Npos$ and a sequence $(\eq_h(x, y))_{h \geq 1}$ of $\FO[\set{E}]$-formulae of size $\leq c \cdot h$,  such
  that for each $h \in \Npos$, every forest $\F \in  \FF$, and all nodes $a, b \in F$ , the following
  holds: If there are $i, j \in [0,{\normalfont \Tower(h)}{-}1]$ such that $\S^\F(a) \cong \T(i)$ and $\S^\F(b) \cong \T(j)$ then
  \begin{equation*}
    \F \models \eq_h[a, b] \quad \text{ if, and only if, } \quad i = j.
  \end{equation*}
\end{lemma}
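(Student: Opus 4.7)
The plan is to define $\eq_h$ by induction on $h$, using the recursive characterisation of tree encodings: the root of $\T(i)$ has exactly one child for every $k$ with $\bit(k,i)=1$, and these children are themselves the roots of the trees $\T(k)$. Moreover, since $i<\Tower(h)=2^{\Tower(h-1)}$ implies $\bit(k,i)=0$ for every $k\geq \Tower(h-1)$, each such child is the root of some $\T(k)$ with $k<\Tower(h-1)$. Consequently, $\T(i)\cong\T(j)$ (for $i,j<\Tower(h)$) iff the sets of isomorphism types of the children of their roots — all tree encodings of indices $<\Tower(h-1)$ — coincide. I would set $\eq_0(x,y) := x{=}x$, which is correct because $\Tower(0)=1$ forces $i=j=0$. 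For the inductive step, define
\[
  \eq_h(x,y) \ := \ \forall z\,\exists z'\,\Big( \big(E(x,z)\to E(y,z')\big) \,\und\, \big(E(y,z)\to E(x,z')\big) \,\und\, \big((E(x,z)\oder E(y,z))\to \eq_{h-1}(z,z')\big) \Big).
\]
The decisive feature is that $\eq_{h-1}$ appears \emph{only once}, giving $\size{\eq_h} \leq \size{\eq_{h-1}}+c_0$ and hence $\size{\eq_h}\leq c\cdot h$ for a suitable constant $c$.

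For correctness I would proceed by induction on $h$. Suppose $\S^\F(a)\cong\T(i)$ and $\S^\F(b)\cong\T(j)$ with $i,j<\Tower(h)$. If $i=j$, then every child $z$ of $a$ matches a unique child $z'$ of $b$ with $\S^\F(z)\cong\S^\F(z')\cong\T(k)$ for some $k<\Tower(h-1)$; the induction hypothesis yields $\F\models\eq_{h-1}[z,z']$, and symmetrically for children of $b$. For any other $z$, the implications in the body are vacuous and one can take $z' := z$. If $i\neq j$, pick some $k<\Tower(h-1)$ with $\bit(k,i)\neq\bit(k,j)$; WLOG $a$ has a child $z$ with $\S^\F(z)\cong\T(k)$ while no child of $b$ does. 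The indegree-$\leq 1$ property of forests, together with $a\neq b$ (forced as soon as $i\neq j$), means $z$ is a child of $a$ only, so the body of $\eq_h$ forces the witness $z'$ to be a child of $b$ satisfying $\eq_{h-1}[z,z']$; but every child $z'$ of $b$ satisfies $\S^\F(z')\cong\T(k')$ for some $k'\neq k$ with both indices $<\Tower(h-1)$, and the induction hypothesis rules out $\eq_{h-1}[z,z']$. Hence $\F\not\models\eq_h[a,b]$.

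The main obstacle is the size bound: the naïve symmetric formulation ``for every child of $x$ there is an equivalent child of $y$, and vice versa'' invokes $\eq_{h-1}$ twice, which doubles the formula size at each level and collapses the linear upper bound. The $\forall z\,\exists z'$ pattern above circumvents this by fusing both directions into a single recursive call, using the two implications $E(x,z)\to E(y,z')$ and $E(y,z)\to E(x,z')$ to ``route'' the witness $z'$ to the opposite side of whichever node $z$ is a child of. A minor point to verify is that choosing $z':=z$ trivially satisfies the body when $z$ is unrelated to $x$ and $y$, and that the forest hypothesis prevents any pathological identification of children between $a$ and $b$, so that the asymmetric argument for $i\neq j$ is sound regardless of which index has the larger bit at position $k$.
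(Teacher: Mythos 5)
The paper itself gives no proof of this lemma---it is quoted verbatim from Flum--Grohe \cite[Lemma~10.21]{FlGr06}---and your reconstruction is correct and follows essentially the same route as the cited source: the recursive characterisation of $\T(i)$ via the bit-set of children together with the $\forall z\,\exists z'$ trick that routes both directions of the back-and-forth through a \emph{single} occurrence of $\eq_{h-1}$, which is exactly what keeps the size linear in $h$. I see no gap; your handling of the degenerate cases ($z$ unrelated to $x,y$, and the indegree-$\leq 1$ argument preventing a node from being a child of both $a$ and $b$ when $a\neq b$) is what makes the asymmetric witness argument go through.
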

\noindent The core of the proof of Theorem \ref{Thm:lowerBound_non-elem} is the following combinatorial lemma which, for suitable classes $\CC$ of structures and $\FO$-sentences, provides a lower bound on the size of $\CC$-equivalent \BSNF-sentences. We will also use the lemma later on in our proofs for lower bounds on classes of structures of bounded degree.

\begin{lemma}\label{lem:combinatorial-argument}
Let $\sigma$ be a relational signature and let $\DD$ be a finite set of pairwise non-isomorphic connected $\sigma$-structures. Let $\CC$ be the class of all $\sigma$-structures where each connected component is isomorphic to a structure from $\DD$. Suppose that there is an $\FO[\sigma]$-sentence~$\psi$ such that for any $\A \in \CC$, $\A \models \psi$ if, and only if,
every connected component of $\A$ is isomorphic to another one.

  Then, every \BSNF-sentence that is $\CC$-equivalent to $\psi$ has size~$> |\mathfrak{D}|$. 
\end{lemma}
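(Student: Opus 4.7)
The plan is to argue by contradiction. Suppose $\chi$ is a \BSNF-sentence of the form $\exists y_1\cdots\exists y_n\,\forall z\,\phi(\ov{y}, z)$, with $\phi$ being $r$-local around $z$, that is $\CC$-equivalent to $\psi$ and satisfies $|\chi|\leq|\DD|$. Since the existential prefix alone already requires at least $2n$ symbols, we get $n<|\DD|$. I would then build a witness structure $\A\in\CC$ as the disjoint union of exactly two copies of each $D\in\DD$; every component then has an isomorphic sibling in $\A$, so $\A\models\psi$ and hence $\A\models\chi$. Fix an existential witness tuple $\ov{b}=(b_1,\dotsc,b_n)\in A^n$ with $\A\models\forall z\,\phi[\ov{b},z]$.

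The combinatorial heart is a pigeonhole argument: the $n$ witnesses $b_i$ lie in at most $n$ connected components of $\A$, while $\A$ consists of $|\DD|$ disjoint pairs of isomorphic components. Since $n<|\DD|$, some $D\in\DD$ has both of its copies in $\A$ disjoint from $\{b_1,\dotsc,b_n\}$. Fix one such copy $D^*$ and set $\A'\isdef\A\setminus D^*$. Then $\A'\in\CC$, and the surviving copy of $D$ in $\A'$ has no isomorphic sibling, so $\A'\not\models\psi$.

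The hard part is the final step: showing that $\A'\models\chi$ with the \emph{same} tuple $\ov{b}$, which contradicts $\CC$-equivalence. For any $z\in A'$, its connected component in $\A'$ coincides with its connected component in $\A$ (since $D^*$ is a distinct component), so $N_r^{\A'}(z)=N_r^{\A}(z)$; and because no $b_i$ lies in $D^*$, the induced substructures $\A'[\{b_1,\dotsc,b_n\}\cup N_r^{\A'}(z)]$ and $\A[\{b_1,\dotsc,b_n\}\cup N_r^{\A}(z)]$ coincide. By $r$-locality of $\phi$ around $z$, we conclude that $\A'\models\phi[\ov{b},z]$ if and only if $\A\models\phi[\ov{b},z]$, and the latter holds by the choice of $\ov{b}$. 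Hence $\A'\models\chi$ but $\A'\not\models\psi$, the desired contradiction. The delicate point is this locality bookkeeping: one must argue that removal of the entire component $D^*$ cannot affect the evaluation of $\phi[\ov{b},z]$ for any $z$, which is ensured precisely by combining $r$-locality of $\phi$ around $z$ with the pigeonhole choice that no witness $b_i$ lies in $D^*$.
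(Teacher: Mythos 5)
Your proposal is correct and follows essentially the same route as the paper: the same witness structure with two copies of each element of $\DD$, the same pigeonhole argument exploiting $n<|\DD|$, and the same locality argument showing that deleting an untouched component cannot change the truth of $\forall z\,\phi[\ov{b},z]$. If anything, your locality bookkeeping (explicitly noting that the induced substructures on $\set{b_1,\dotsc,b_n}\cup N_r(z)$ coincide) is slightly more careful than the paper's, which only remarks that the $r$-neighbourhoods of surviving elements are unchanged.
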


\begin{proof}
  For a contradiction, assume that there is a sentence $\psi^B$ in \BSNF\ that is $\CC$-equivalent to~$\psi$ and that has size $\leq |\mathfrak{D}|$. 

  Let $\A \in \CC$ such that, for each $\B \in \DD$, $\A$ contains exactly two connected components that are isomorphic to $\B$. 

  Hence, $\A\models\psi$ and, by $\CC$-equivalence of $\psi$ and~$\psi^B$, also

  \begin{equation}
    \label{expr:combinatorial-argument:A-psiB}
    \A \ \models \ \psi^B.
  \end{equation}
  Since $\psi^B$ is in \BSNF, it has the shape $ \exists y_1 \cdots \exists y_n \forall z\, \phi(y_1, \dotsc, y_n, z) $

  for some $n\in \NN$ with $n < |\mathfrak{D}|$ and a formula $\phi(y_1, \dotsc, y_n, z)$ that is $r$-local around the free variable $z$ for some~$r\in \NN$. 

  By (\ref{expr:combinatorial-argument:A-psiB}), there are elements $a_1, \dotsc, a_n$ in $A$ such that
  $ \A \models \phi[a_1,\dotsc,a_n,b]$
  for every $b \in A$. 
  Since $n < |\mathfrak{D}|$, there has to be a structure $\B$ in $\mathfrak{D}$ such that none of the elements $a_1, \dotsc, a_n$ belongs two any of the two 
 connected components of $\A$ that are isomorphic to $\B$. Let $\A'$ denote the substructure of $\A$ obtained by deleting one of the components that isomorphic to~$\B$. Then, by definition of $\psi$, 
  \begin{equation}
   \label{expr:combinatorial-argument:Ap-psi}
    \A' \ \not\models \ \psi.
  \end{equation}
Note that for every element $b$ of $\A$ that is still present in $\A'$, the 
 connected component it belongs to remains unchanged. In particular, it holds that  
 $ \N_r^{\A}(b) \ \cong \ \N_r^{\A'}(b) $
  for every $b \in A'$. Thus, since $\phi$ is $r$-local around its free variable $z$, we can conclude that
  $
    \A' \ \models \ \phi[a_1, \dotsc, a_n, b]$
  for every $b \in A'$, and therefore $
    \A' \ \models \ \psi^B$. 
    
    Since $\psi$ and $\psi^B$ are assumed to be $\CC$-equivalent, it follows that
  $
    \A' \ \models \ \psi.
  $
 However, this is a contradiction to (\ref{expr:combinatorial-argument:Ap-psi}).
\end{proof}

\noindent  We are now ready to prove Theorem \ref{Thm:lowerBound_non-elem}.
\begin{proof}[Proof of Theorem \ref{Thm:lowerBound_non-elem}]
For each $h\in\Npos$, let 
\begin{equation*}
  \phi_h \ \isdef \quad \forall x \, \Bigl( \Root(x) \ \ \impl \ \ \exists y \,\bigl( \Root(y) \ \land \ \lnot x{=}y \ \land \ \eq_h(x,y) \bigr) \Bigr)
\end{equation*}
where $\Root(x) \isdef \lnot \exists y \, E(y,x)$ and where the formula $\eq_h(x,y)$ is provided by Lemma~\ref{lem:eq-formula}. By Lemma~\ref{lem:eq-formula}  there is a number $c \in \Npos$ such that the size of $\phi_h$ is $\leq c \cdot h$ for each $h \ge 1.$

For the following, we fix an $h \in \Npos$.
Recall that each $\F \in \FF^{\enc}_{\sle h}$ is a disjoint union of trees that are each isomorphic to some tree encoding from $\TT^{\enc}_{\sle h}$. In particular,
\begin{equation*}
     \F \models \phi_h \quad
    \text{if, and only if,} \quad  \text{every tree in $\F$ is isomorphic to another one.}
\end{equation*}

Since $|\TT^{\enc}_{\sle h}| = \Tower(h)$, it follows from Lemma~\ref{lem:combinatorial-argument} that every \BSNF-sentence that is equivalent to $\phi_h$ on $\FF_{\sle h}$ has size $> \Tower(h)$. This completes the proof of Theorem \ref{Thm:lowerBound_non-elem}.
\end{proof}

\subsection{Lower Bounds for Structures of Bounded Degree}

This section shows that our algorithm (cf. Theorem~\ref{thm:bsnf-upper-bound}) is worst-case optimal. To this aim, we provide a 2-fold exponential lower bound on the size of $\BSNF$-formulae on classes of structures of degree $\le 2$, and 3-fold exponential lower bounds on classes of degree $\leq d$ for~$d \geq3 $.

We first prove the lower bound for degree $\leq 2$:

\begin{theorem}\label{Thm:lowerBound_d_2}
 There is a signature $\sigma$ and a sequence $(\phi_h)_{h\geq 1}$ of $\FO[\sigma]$-sentences of size~$\lin{h}$  such that for each $h\geq 1$, every sentence in \BSNF\ that is equivalent to $\phi_h$ on all structures of degree $\leq 2$ over the signature $\sigma$ has size $> 2^{2^h+1}$.
\end{theorem}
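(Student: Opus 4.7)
The plan is to mimic the structure of the proof of Theorem~\ref{Thm:lowerBound_non-elem} but with a \emph{linear} encoding of bits suited to structures of degree~$\leq 2$. Over the signature $\sigma \isdef \set{E, L}$ with a binary (directed) edge relation~$E$ and a unary label~$L$, I would take $\DD_h$ to be a set of pairwise non-isomorphic directed $E$-paths of some fixed length~$n_h$ chosen large enough that $|\DD_h| > 2^{2^h+1}$ (length~$2^h{+}1$, with each vertex arbitrarily marked or unmarked by~$L$, is more than enough). A directed path has a unique source vertex and its remaining vertices are distinguishable by distance from the source, so no symmetries arise. All structures in $\DD_h$ are connected of degree~$\leq 2$, hence the class $\CC_h$ of finite disjoint unions of structures in $\DD_h$ is contained in the class of all degree-$\leq 2$ $\sigma$-structures.

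The sentence would then have the shape
\begin{equation*}
  \phi_h \ \isdef \ \forall x \bigl( \Root(x) \ \impl \ \exists y \, (\Root(y) \ \und \ \neg x{=}y \ \und \ \mathrm{samepath}(x,y)) \bigr),
\end{equation*}
where $\Root(x) \isdef \neg \exists z\, E(z,x)$ picks out the source of each component, and $\mathrm{samepath}(x,y)$ is a formula of size $\lin{h}$ asserting that the directed $E$-paths starting at~$x$ and~$y$ are isomorphic as $L$-labelled paths. On $\CC_h$, $\phi_h$ expresses precisely ``every connected component is isomorphic to another one''. Once $|\phi_h| = \lin{h}$ is secured, Lemma~\ref{lem:combinatorial-argument} yields the theorem: every $\BSNF$-sentence equivalent to~$\phi_h$ on all degree-$\leq 2$ $\sigma$-structures is in particular $\CC_h$-equivalent to it, and so must have size~$> |\DD_h| > 2^{2^h+1}$.

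The main technical step is therefore to construct $\mathrm{samepath}$ in size~$\lin{h}$. A naive doubling recursion that splits the path into two halves of length~$2^k$ each produces a formula of size $\Theta(2^h)$. The standard universal-quantifier trick (sharing the two recursive calls under one~$\forall$) brings this down to~$\lin{h^2}$, but one still pays~$\lin{k}$ per level to certify that the midpoint is at distance exactly~$2^k$ from the start. To achieve $\lin{h}$, I would fuse the ``has distance~$2^k$'' and the ``labels agree'' recursions into a single $4$-ary predicate $F_k(x,y,x',y')$, whose intended meaning is ``directed $E$-paths of length~$2^k$ run from~$x$ to~$x'$ and from~$y$ to~$y'$, and their $L$-labellings agree position by position''. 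The base case is
\begin{equation*}
  F_0(x,y,x',y') \ \isdef \ E(x,x') \ \und \ E(y,y') \ \und \ (L(x) \gdw L(y)) \ \und \ (L(x') \gdw L(y'))
\end{equation*}
and the inductive step
\begin{equation*}
  F_{k+1}(x,y,x',y') \ \isdef \ \exists z\, \exists w\, \forall a\, \forall b\, \forall c\, \forall d\, \bigl( \chi \ \impl \ F_k(a,b,c,d) \bigr),
\end{equation*}
where $\chi$ is the constant-size disjunction $(a{=}x \und b{=}y \und c{=}z \und d{=}w) \oder (a{=}z \und b{=}w \und c{=}x' \und d{=}y')$. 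Each inductive step adds only $\lin{1}$ symbols, so $|F_h| = \lin{h}$. Finally, $\mathrm{samepath}(x,y)$ is obtained by combining $F_h$ with a constant-size statement that fixes~$x'$ and~$y'$ as the endpoints of their respective paths (plus a short extension if $n_h$ differs slightly from~$2^h$).

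The main obstacle is exactly this fusion: without bundling distance and labels into one predicate, the size bloats to~$\Theta(h^2)$ and the desired $\lin{h}$ bound is lost. Once the hierarchy $(F_k)_{k \leq h}$ is in place, an easy induction shows that on degree-$\leq 2$ structures (where $E$ is functional along a directed path) $F_h$ has its intended meaning, and the lower bound then follows directly from Lemma~\ref{lem:combinatorial-argument} as outlined above.
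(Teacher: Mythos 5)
Your proposal is correct and follows essentially the same route as the paper: labeled chains over $\set{E,L}$, a sentence $\phi_h$ saying ``every component's root has a twin'', Lemma~\ref{lem:combinatorial-argument}, and an $\lin{h}$-size isomorphism test built by the doubling recursion with the universal-quantifier sharing trick. The only (immaterial) difference is that the paper keeps the $4$-ary co-reachability predicate label-free and checks $L(x')\gdw L(y')$ once via a single outer universal quantifier over all co-reachable pairs, whereas you fuse the label comparison into the recursive predicate $F_k$; both yield the $\lin{h}$ bound.
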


To this aim, we use the same approach as in the proof of Theorem \ref{Thm:lowerBound_non-elem}. Restricting the class of all trees to structures of degree $2$, we obtain the class of paths, which are also called \emph{chains}.
In the following, we consider \emph{labeled chains}. To this aim, we extend the signature $\set{E}$ by a unary relation symbol $L$ with the meaning that for every node $v$ of a labeled chain~$\C$, we have $v \in L^{\C}$ if the node is labeled and $v \notin L^{\C}$ if it is not. The \emph{root} of $\C$ is defined in the obvious way.

For each $h \in \Npos$, we let $\TT_{2,h}$ be a set of labeled chains of height $h$ which contains for every labeled chain $\C$ of height $h$ precisely one $\C'$ such that $\C\cong\C'$.  Note that, by interpreting labeled chains as encodings of binary numbers, we obtain that $|\TT_{2,2^h}| =  2^{2^h+1}$. 
By $\FF_{2,h}$ we denote the class of all structures over the signature $\set{E,L}$ whose connected components are labeled chains of height $h$. 

The next step is to prove the following lemma. 

\begin{lemma} \label{lem:iso_2-formula}
There is a number $c \in \Npos$ and a sequence $(\iso_h(x, y))_{h \geq 1}$ of $\FO[\set{E,L}]$-formulae of size $\leq c \cdot h$, such
that for each $h \geq 1$, the folllowing holds:
If $\F \in \FF_{2,2^h}$ and $a,b$ are roots of connected components $\C,\C'$ of $\F$, then
\begin{equation*}
  \F \models \iso_h[a,b] \quad\text{if, and only if,}\quad \C \cong \C'.
\end{equation*}
\end{lemma}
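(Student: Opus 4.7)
The plan is to reduce the isomorphism check between two labeled chains of height $2^h$ to a uniform label-matching condition along all aligned depths. To this end, I would construct an auxiliary formula $\textup{par}_h(x, x', y, y')$ of size $O(h)$ whose intended meaning on $\FF_{2, 2^h}$ is that there exists some $d \in [0, 2^h]$ with $x'$ obtained from $x$ by following $d$ edges of $E$, and $y'$ obtained from $y$ by following $d$ edges of $E$. Given such a formula, I simply take
\[
\iso_h(x, y) \ \isdef \ \forall x' \forall y' \bigl(\textup{par}_h(x, x', y, y') \ \impl \ (L(x') \gdw L(y'))\bigr).
\]
If $a, b$ are roots of two connected components $\C, \C'$ of some $\F \in \FF_{2, 2^h}$, then, since every component is a directed path of height $2^h$, the pairs $(x', y')$ with $\F \models \textup{par}_h[a, x', b, y']$ are precisely the pairs $(v_d, w_d)$ of depth-$d$ nodes of $\C$ and $\C'$ for $d \in [0, 2^h]$. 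Hence $\iso_h$ evaluates to true iff the label sequences of $\C$ and $\C'$ coincide, i.e., iff $\C \cong \C'$.

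To build $\textup{par}_h$ with size $O(h)$, I would employ the Stockmeyer-Meyer doubling trick already used in the references of the paper. The base case is
\[
\textup{par}_0(x, x', y, y') \ \isdef \ (x{=}x' \und y{=}y') \ \oder \ (E(x, x') \und E(y, y')),
\]
and for the inductive step, instead of writing $\exists u \exists v \,(\textup{par}_i(x, u, y, v) \und \textup{par}_i(u, x', v, y'))$, which would double the formula size at each level, I introduce a \emph{single} recursive occurrence governed by a universal quantifier:
\[
\textup{par}_{i+1}(x, x', y, y') \ \isdef \ \exists u \exists v \forall a \forall a' \forall b \forall b' \bigl( \chi \ \impl \ \textup{par}_i(a, a', b, b')\bigr),
\]
where $\chi$ abbreviates the disjunction $((a{=}x \und a'{=}u \und b{=}y \und b'{=}v) \oder (a{=}u \und a'{=}x' \und b{=}v \und b'{=}y'))$. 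The resulting size recursion is $|\textup{par}_{i+1}| = |\textup{par}_i| + O(1)$, which yields $|\textup{par}_h| = O(h)$ and hence $|\iso_h| = O(h)$, as desired.

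The main point I expect to need careful checking is correctness of the doubling step over $\FF_{2, 2^h}$. By induction on $i$, one has to show that $\textup{par}_i(x, x', y, y')$ holds iff there is a common $d \in [0, 2^i]$ together with sequences of $d$ directed $E$-edges connecting $x$ to $x'$ and $y$ to $y'$. The base case is immediate from the definition. In the inductive step, in one direction a witness $(u, v)$ is obtained by advancing $\lceil d/2 \rceil$ steps from $(x, y)$, and both halves then satisfy $\textup{par}_i$; in the other direction, the universal quantifier forces both sub-pairs to witness $\textup{par}_i$, whose concatenation is a synchronized walk of length at most $2 \cdot 2^i = 2^{i+1}$. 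In $\FF_{2, 2^h}$, every vertex has at most one outgoing $E$-edge and no edges cross between components, so the synchronized walk from the roots is uniquely determined, never leaks between components, and reaches exactly the depth-aligned node pairs up to depth $2^h$, which cover the entire chains.
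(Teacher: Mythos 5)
Your proposal is correct and follows essentially the same route as the paper: your $\textup{par}_h$ is precisely the paper's co-reachability formula $\coreach_{2,2^h}$ (specialised to the single successor relation $E$), built with the same single-recursive-occurrence doubling trick under universal quantifiers, and your $\iso_h$ is defined identically as ``all co-reachable pairs carry the same label.'' The only cosmetic difference is that the paper proves the general degree-$d$ statement (Lemma~\ref{lem:iso_d-formula}, tracking which of the relations $E_1,\dotsc,E_{d-1}$ is used at each step) and obtains the $d=2$ case by specialisation, whereas you argue the $d=2$ case directly.
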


\noindent The proof of Lemma \ref{lem:iso_2-formula} is deferred to Appendix~\ref{appendix:lowerbounds}. We are now ready to prove Theorem~\ref{Thm:lowerBound_d_2}.

\begin{proof}[Proof of Theorem \ref{Thm:lowerBound_d_2}] We choose $\sigma \isdef \set{E,L}$. To prove the theorem it suffices to show, that there is a sequence $(\phi_h)_{h\geq 1}$ of $\FO[\sigma]$-sentences of size $\lin{h}$  such that for each $h\geq 1$, every sentence in \BSNF\ that is equivalent to $\phi_h$ on $\FF_{2,2^h}$ has size~$> 2^{2^h+1}$.

 For each $h\geq 1$, let 
 \begin{equation*}
   \phi_h \ \isdef \quad \forall x \, \Bigl(\Root(x) \ \ \impl \ \ \exists y \,\bigl( \Root(y) \ \land \ \lnot x{=}y \ \land \ \iso_h(x,y) \bigr) \Bigr)
 \end{equation*}
 where $\Root(x) \isdef \lnot \exists y \, E(y,x)$ and where $\iso_h(x,y)$ is provided by Lemma~\ref{lem:iso_2-formula}. As a consequence of Lemma~\ref{lem:iso_2-formula}, there is a number $c \in \Npos$ such that for each $h \ge 1$ the size of $\phi_h$ is $\leq c \cdot h$.

For the following, we fix an $h \in \Npos$. Then, for each $\F \in \FF_{2,2^h}$, we have that
\begin{equation*}
     \F \models \phi_h \quad
    \text{if, and only if,} \quad  \text{every labeled chain in $\F$ is isomorphic to another one.}
\end{equation*}
Hence, by the combinatorial Lemma \ref{lem:combinatorial-argument} it follows that every sentence in \BSNF\ that is equivalent to $\phi_h$ on $\FF_{2,2^h}$ and, in particular, on all forests of degree $\leq 2$ over the signature $\sigma$, has size~$> 2^{2^h+1}$.
\end{proof}

\noindent Finally, we will establish the lower bound for classes of structures of degree $\le d$ for~$d \ge 3$.

\begin{theorem}\label{Thm:lowerBound_d_ge2}
  Let $d \ge 3$ be a natural number. There is a signature $\sigma_d$ and a sequence $(\phi_h)_{h\geq 1}$ of $\FO[\sigma_d]$-sentences of size $\lin{h}$
  such that for each $h\geq 1$, every sentence in \BSNF\ that is equivalent to $\phi_h$ on all structures of degree $\leq d$ over the signature $\sigma_d$ has size $> 2^{(d-1)^{2^{h}}}$.
\end{theorem}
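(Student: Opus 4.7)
The plan is to reduce to Lemma~\ref{lem:combinatorial-argument} exactly as in Theorems~\ref{Thm:lowerBound_non-elem} and~\ref{Thm:lowerBound_d_2}. We will (i) fix a signature $\sigma_d$ and a family $\DD$ of $2^{(d-1)^{2^h}}$ pairwise non-isomorphic connected $d$-bounded $\sigma_d$-structures; (ii) build an $\FO[\sigma_d]$-formula $\iso_h(x,y)$ of size $\lin{h}$ that, on components of a forest whose components come from $\DD$, holds iff these components are isomorphic; and (iii) let
\begin{equation*}
  \phi_h \ \isdef \ \forall x\bigl(\Root(x) \ \impl\ \exists y(\Root(y) \und \nicht x{=}y \und \iso_h(x,y))\bigr),
\end{equation*}
with $\Root(x) \isdef \nicht\exists y\Oder_{i=0}^{d-2} E_i(y,x)$. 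On the class $\CC$ of $\sigma_d$-structures whose components are isomorphic to members of $\DD$, $\phi_h$ expresses ``every component is isomorphic to another one'', and Lemma~\ref{lem:combinatorial-argument} then delivers the desired lower bound $> |\DD| = 2^{(d-1)^{2^h}}$.

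\textbf{The encoding.} I take $\sigma_d \isdef \set{E_0, \dotsc, E_{d-2}, L}$ with binary symbols $E_i$ and a unary symbol $L$. The members of $\DD$ are complete ordered $(d-1)$-ary trees of height $H \isdef 2^h$ in which the edge from a node to its $i$-th child belongs to $E_i$, and each of the $(d-1)^H$ leaves is labeled $0$ or $1$ via $L$. The Gaifman graph is the underlying tree, so each encoding is $d$-bounded. Since the $E_i$ distinguish children, the only automorphism of such a structure is the identity, so the $2^{(d-1)^{2^h}}$ leaf-labelings produce $|\DD| = 2^{(d-1)^{2^h}}$ pairwise non-isomorphic $\sigma_d$-structures.

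\textbf{The isomorphism formula via Stockmeyer doubling.} To keep $\iso_h$ of size $\lin{h}$, I use the standard $\exists\forall$-alternation trick to build formulae $\textnormal{same}_{2^j}(x,x',y,y')$ of size $\lin{j+d}$ expressing that $x$ lies $2^j$ levels below $x'$, $y$ lies $2^j$ levels below $y'$, and both descents use the same sequence of child-indices. The base case is $\textnormal{same}_1(x,x',y,y') \isdef \Oder_{i=0}^{d-2}(E_i(x',x) \und E_i(y',y))$, and the doubling step is
\begin{equation*}
  \textnormal{same}_{2^{j+1}}(x,x',y,y') \ \isdef \ \exists z \exists z' \forall u \forall u' \forall v \forall v' \bigl(\alpha \ \impl \ \textnormal{same}_{2^j}(u,u',v,v')\bigr),
\end{equation*}
with $\alpha \isdef \bigl((u,u',v,v')=(x,z,y,z')\bigr) \oder \bigl((u,u',v,v')=(z,x',z',y')\bigr)$, each tuple equality expanded as a conjunction of four $=$-atoms. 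A single recursive call now validates both the upper half (length-$2^j$ descent above the midpoint) and the lower half (length-$2^j$ descent below it) in \emph{both} trees simultaneously, so $\size{\textnormal{same}_{2^{j+1}}} = \size{\textnormal{same}_{2^j}} + \lin{1}$ and therefore $\size{\textnormal{same}_{2^h}} = \lin{h}$ for fixed $d$. Setting
\begin{equation*}
  \iso_h(x,y) \ \isdef \ \forall u \forall v\bigl(\textnormal{same}_{2^h}(u,x,v,y) \ \impl \ (L(u)\gdw L(v))\bigr)
\end{equation*}
yields a formula of size $\lin{h}$ that, on roots $a,a'$ of encodings from $\DD$, holds iff the two labeled trees coincide leaf-by-leaf along every path, i.e.\ iff they are isomorphic as $\sigma_d$-structures.

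\textbf{Main obstacle.} The principal technical step is verifying correctness of the doubling recursion: one must check that the two-case antecedent $\alpha$, together with one shared recursive call, faithfully composes an upper and a lower descent into a descent of twice the depth, in \emph{both} trees at once and along the \emph{same} child-sequence. Once that is established, the remaining verifications — $d$-boundedness and pairwise non-isomorphism of the encodings, the size computations for $\textnormal{same}_{2^j}$, $\iso_h$ and $\phi_h$, and the fact that $\phi_h$ on $\CC$ expresses exactly the hypothesis required by Lemma~\ref{lem:combinatorial-argument} — are routine.
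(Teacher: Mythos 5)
Your proposal is correct and follows essentially the same route as the paper: complete ordered labeled $(d{-}1)$-ary trees of height $2^h$ as the family $\DD$, a logarithmic-size ``co-reachability''/``same-path'' formula built by the $\exists\forall$ doubling trick (the paper's $\coreach_{d,\ell}$ from Frick--Grohe), the same sentence $\phi_h$, and the combinatorial Lemma~\ref{lem:combinatorial-argument}. The only cosmetic differences (labeling only the leaves rather than all nodes, and requiring descent of exactly $2^j$ rather than at most $2^j$ levels) do not affect correctness or the bound.
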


Let $d\geq 3$. 
For the proof of Theorem \ref{Thm:lowerBound_d_ge2}, we consider ordered and labeled trees of arity~$d{-}1$, i.e., where every node has at most $d{-}1$ children. Thus, in particular, every $(d{-}1)$-ary tree has degree $\leq d$. 
It is \emph{complete with height $h$} if every path from the root to a leaf has length $h$ and every non-leaf node has exactly $d{-}1$ children.

We represent ordered and labeled trees of arity~$d{-}1$ by structures over the signature $\tau_{d-1} \isdef \set{E_1,\dotsc,E_{d-1},L}$, where the unary predicate $L$ is used as the label (in the same way as for labeled chains) and where the binary predicates $E_1,\dotsc, E_{d-1}$ impose the ordering on the children of every node. I.e., for two nodes $u$ and $v$ of a labeled and ordered tree $\T$, we have $(u,v)\in E^{\T}_i$ if, and only if, $v$ is the $i$-th child of $u$. The \emph{root} of $\T$ is defined in the obvious way.

For each $h\geq 1$, we let $\TT_{d,h}$ be a set of complete labeled and ordered $(d{-}1)$-ary trees with height $h$ which contains for every complete labeled and ordered $(d{-}1)$-ary tree $\T$ with height~$h$ precisely one $\T'$ such that $\T\cong\T'$. 
Note that $|\TT_{d,2^h}| \ge 2^{(d-1)^{2^h}}$.
By $\FF_{d,h}$, we denote the class of all $\tau_{d-1}$-structures whose connected components are complete labeled and ordered $(d{-}1)$-ary trees with height $h$.

The next step is to prove the following lemma.

\begin{lemma}[cf. {\cite[Lemma 21]{FrickGrohe-FO-MSO-revisited}}] \label{lem:iso_d-formula} 
For each natural number $d \ge 2$, there is a number $c_d \in \Npos$ and a sequence  $(\iso_{d-1,h}(x, y))_{h \geq 1}$ of $\FO[\tau_{d-1}]$-formulae of size $\leq c_d \cdot h$, such
that for each $h \geq 1$ the following
holds:  If $\F \in \FF_{d,2^h}$ and $a,b$ are roots of connected components $\T,\T'$ of $\F$, then
\begin{equation*}
  \F \models \iso_{d,h}[a,b] \quad\text{if, and only if,} \quad \T \cong \T'.
\end{equation*}
\end{lemma}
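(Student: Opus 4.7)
The plan is to exploit that any two complete ordered labelled $(d{-}1)$-ary trees of the same height are isomorphic if, and only if, their labellings coincide at every address in $[d{-}1]^*$. For trees of height $2^h$, this amounts to checking labels at all addresses of length $\leq 2^h$, and the only way to express this by a formula of size $\lin{h}$ is by a Stockmeyer-Meyer style doubling construction, entirely analogous to the one underlying $\eq_h$ in Lemma~\ref{lem:eq-formula}. Concretely, I will build, by induction on $h$, auxiliary $\FO[\tau_{d-1}]$-formulae $\gamma_h(x,y,u,v)$ of size $\lin{d\cdot h}$ such that $\F\models\gamma_h[x,y,u,v]$ iff there is an address $\alpha\in[d{-}1]^{\sle 2^h}$ for which $u$ is reached from $x$ via $\alpha$ and $v$ is reached from $y$ via $\alpha$. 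The isomorphism formula then just says that labels agree whenever $\gamma_h$ holds.

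The base case $\gamma_0(x,y,u,v)$ is the fixed-size formula
\begin{equation*}
  (u{=}x \ \und \ v{=}y) \ \ \oder \ \ \Oder_{i=1}^{d-1} \bigl(E_i(x,u) \ \und \ E_i(y,v)\bigr),
\end{equation*}
which handles addresses of length $\sle 1 = 2^0$. The naive inductive step would be $\gamma_{h+1}(x,y,u,v) \equiv \exists x' \exists y'\,\bigl(\gamma_h(x,y,x',y') \ \und \ \gamma_h(x',y',u,v)\bigr)$; this doubles the maximal address length but also doubles the formula size. The key move is therefore to fold the two occurrences of $\gamma_h$ into one via quantifier reuse, defining
\begin{equation*}
  \gamma_{h+1}(x,y,u,v) \ \isdef \ \exists x' \exists y'\, \forall a\,\forall b\,\forall c\,\forall e\, \Bigl(  \bigl(\chi_1 \ \oder \ \chi_2\bigr) \ \impl \ \gamma_h(a,b,c,e)\Bigr),
\end{equation*}
where $\chi_1 \isdef (a{=}x\und b{=}y\und c{=}x'\und e{=}y')$ and $\chi_2 \isdef (a{=}x'\und b{=}y'\und c{=}u\und e{=}v)$. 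This adds only $\lin{1}$ symbols per level, so iterating $h$ times yields $\size{\gamma_h}=\lin{d\cdot h}$ while reaching address length~$2^h$.

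I will finally set
\begin{equation*}
  \iso_{d-1,h}(x,y) \ \isdef \ \forall u\,\forall v\,\bigl(\gamma_h(x,y,u,v) \ \impl\ (L(u)\gdw L(v))\bigr),
\end{equation*}
whose size is again $\lin{d\cdot h}$. Its correctness rests on the observation that, because $\F\in\FF_{d,2^h}$ consists of complete ordered $(d{-}1)$-ary trees of height $2^h$, every node below a root $a$ is uniquely named by an address in $[d{-}1]^{\sle 2^h}$, and any isomorphism between two such components is forced to identify nodes sharing the same address; hence agreement of labels along all common addresses is equivalent to isomorphism. The step I expect to be the main obstacle is the precise verification that $\gamma_h$ characterises exactly the intended ``same descendant-address of length $\sle 2^h$'' relation -- in particular that the composition in the inductive step neither escapes the subtree rooted at $x$ (resp.\ $y$) nor misses intermediate depths -- together with the bookkeeping that confirms that the quantifier-reuse pattern really contributes only $\lin{1}$ symbols per level, so that the cumulative size stays within $c_d\cdot h$.
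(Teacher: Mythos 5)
Your construction is correct and is essentially the paper's own proof: your $\gamma_h$ is the paper's co-reachability formula $\coreach_{d,2^h}$ (built by the same Fischer--Rabin/Stockmeyer--Meyer quantifier-reuse doubling step), and your final formula $\forall u\,\forall v\,\bigl(\gamma_h(x,y,u,v)\impl(L(u)\gdw L(v))\bigr)$ is exactly the paper's $\iso_{d,h}$. The only cosmetic difference is that the paper defines $\coreach_{d,\ell}$ for arbitrary $\ell$ (with an additional clause for odd $\ell$), whereas you specialise directly to $\ell=2^h$, which is all the lemma requires.
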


\noindent The proof of Lemma \ref{lem:iso_d-formula} can be found in the appendix.
We conclude with the proof of Theorem \ref{Thm:lowerBound_d_ge2}.

\begin{proof}[Proof of Theorem \ref{Thm:lowerBound_d_ge2}] The proof proceeds in the same manner as the proof of Theorem \ref{Thm:lowerBound_d_2}. For a given $d \ge 3$, we choose $\sigma_d \isdef \tau_{d-1}$. To prove the theorem it suffice to show that there is a sequence $(\phi_h)_{h\geq 1}$ of $\FO[\tau_{d-1}]$-sentences of size $\lin{h}$  such that for each $h\geq 1$, every sentence in \BSNF\ that is equivalent to $\phi_h$ on $\FF_{d,2^h}$ has size $> 2^{(d-1)^{2^{h}}}$.

 For each $h\geq 1$, let 
 \begin{equation*}
   \phi_h \ \isdef \quad \forall x \, \Bigl(   \Root_{d,h}(x) \ \ \impl \ \ \exists y \,\bigl( \Root_{d,h}(y) \ \land \ \lnot x{=}y \ \land \ \iso_{d,h}(x,y) \bigr) \Bigr)
 \end{equation*}
 where $\Root_{d,h}(x) \isdef \lnot \exists y \bigvee\limits_{i=1}^{d-1} E_i(y,x)$ and where the formula  $\iso_{d,h}(x,y)$ is provided by Lemma~\ref{lem:iso_d-formula}. By Lemma~\ref{lem:iso_d-formula}, there is a number $c \in \Npos$ such that for each $h \ge 1$ the size of~$\phi_h$ is $\leq c \cdot h$.

For the following, we fix an $h \in \Npos$. Then, for each $\F \in \FF_{d,2^h}$, we have that
\begin{equation*}
     \F \models \phi_h \quad
    \text{if, and only if,} \quad  \text{every connected component of $\F$ is isomorphic to another one.}
\end{equation*}
By the the combinatorial Lemma \ref{lem:combinatorial-argument} it follows that every sentence in \BSNF\ that is equivalent to $\phi_h$ on $\FF_{d,2^h}$ and, in particular, on all forests of degree $\leq d$ over the signature $\sigma_d$,  has size~$> 2^{(d-1)^{2^h}}$. 
\end{proof}
\noindent Note that the lower bounds proven in this section directly imply that the upper bounds of Section~\ref{section:upper-bounds} are essentially tight.

\section{Conclusion}
\label{section:conclusion}
In this article, we have examined the complexity of constructing the local normal form \BSNF\ from Barthelmann and Schwentick for $\FO$ logic. It turned out that already on the class of all finite forests (of unbounded degree), a non-elementary complexity in terms of the size of the input formula is unavoidable. On the other hand, on classes of structures of bounded degree, \BSNF\ can be computed in elementary time. More precisely, for the class of all structures of degree $\leq d$ for $d = 2$ ($d\geq 3$), our algorithm is $2$-fold ($3$-fold) exponential and worst-case optimal. Our results have corresponding implications for the local Ehrenfeucht-Fra\"iss\'e game and the automata model for $\FO$ logic discussed in \cite{SchB99}. Moreover, the specific shape of \BSNF\ and
 being able to compute \BSNF\ with the same time complexity as Hanf and Gaifman normal form makes it an interesting candidate for new locality-based algorithmic meta-theorems.

\bibliography{bibliography}

\begin{thebibliography}{10}

\bibitem{berkholz_et_al:LIPIcs:2017:7053}
Christoph Berkholz, Jens Keppeler, and Nicole Schweikardt.
\newblock Answering {FO+MOD} queries under updates on bounded degree databases.
\newblock In Michael Benedikt and Giorgio Orsi, editors, {\em 20th
  International Conference on Database Theory (ICDT 2017)}, volume~68 of {\em
  Leibniz International Proceedings in Informatics (LIPIcs)}, pages 8:1--8:18,
  Dagstuhl, Germany, 2017. Schloss Dagstuhl--Leibniz-Zentrum fuer Informatik.

\bibitem{BolligKuske2012}
Benedikt Bollig and Dietrich Kuske.
\newblock {An optimal construction of Hanf sentences}.
\newblock {\em Journal of Applied Logic}, 10(2):179--186, 2012.

\bibitem{DGKS07}
Anuj Dawar, Martin Grohe, Stephan Kreutzer, and Nicole Schweikardt.
\newblock {Model Theory Makes Formulas Large}.
\newblock In {\em Proceedings of the 34th {I}nternational {C}olloquium on
  {A}utomata, {L}anguages and {P}rogramming ({ICALP} 2007)}, pages 1076--1088,
  2007.
\newblock Full version available as preprint NI07003-LAA, Isaac Newton
  Institute of Mathematical Sciences (2007).

\bibitem{DurandGrandjean}
Arnaud Durand and Etienne Grandjean.
\newblock First-order queries on structures of bounded degree are computable
  with constant delay.
\newblock {\em {ACM} {T}ransactions on {C}omputational {L}ogic}, 8(4), 2007.

\bibitem{EbbinghausFlum}
Heinz-Dieter Ebbinghaus and J\"org Flum.
\newblock {\em Finite Model Theory}.
\newblock Springer, 1999.

\bibitem{Ehr61}
Andrzej Ehrenfeucht.
\newblock An application of games to the completeness problem for formalized
  theories.
\newblock {\em Fundamenta {M}athematicae}, pages 49:129--141, 1961.

\bibitem{FagSV95}
Ronald Fagin, Larry~J. Stockmeyer, and Moshe~Y. Vardi.
\newblock On monadic {NP} vs.\ monadic co-{NP}.
\newblock {\em Information and Computation}, 120(1):78--92, 1995.

\bibitem{FlGr06}
J\"{o}rg Flum and Martin Grohe.
\newblock {\em Parameterized Complexity Theory}.
\newblock Springer-Verlag Berlin Heidelberg, 2006.

\bibitem{Fra54}
Roland Fra\"iss\'e.
\newblock Sur quelques classifications des systèmes de relations.
\newblock {\em Publications scientifiques de l'{U}niversité d'{A}lger, S\'erie
  A}, pages 1:35--182, 1954.

\bibitem{FrickGrohe01}
Markus Frick and Martin Grohe.
\newblock Deciding first-order properties of locally tree-decomposable
  structures.
\newblock {\em Journal of the {ACM}}, 48(6):1184--1206, 2001.

\bibitem{FrickGrohe-FO-MSO-revisited}
Markus Frick and Martin Grohe.
\newblock The complexity of first-order and monadic second-order logic
  revisited.
\newblock {\em Annals of Pure and Applied Logic}, 130(1-3):3--31, 2004.

\bibitem{Gaifman}
Haim Gaifman.
\newblock On local and non-local properties.
\newblock In J.~Stern, editor, {\em Proceedings of the Herbrand Symposium,
  Logic Colloquium '81}, pages 105--135. North Holland, 1982.

\bibitem{DBLP:conf/stoc/GroheKS14}
Martin Grohe, Stephan Kreutzer, and Sebastian Siebertz.
\newblock Deciding first-order properties of nowhere dense graphs.
\newblock In David~B. Shmoys, editor, {\em Symposium on Theory of Computing,
  {STOC} 2014, New York, NY, USA, May 31 - June 03, 2014}, pages 89--98. {ACM},
  2014.
\newblock URL: \url{http://dl.acm.org/citation.cfm?id=2591796}.

\bibitem{grohe_schweikardt_2004}
Martin Grohe and Nicole Schweikardt.
\newblock Comparing the succinctness of monadic query languages over finite
  trees.
\newblock {\em RAIRO - Theoretical Informatics and Applications},
  38(4):343–373, 2004.

\bibitem{DBLP:journals/lmcs/GroheS05}
Martin Grohe and Nicole Schweikardt.
\newblock The succinctness of first-order logic on linear orders.
\newblock {\em Logical Methods in Computer Science}, 1(1), 2005.

\bibitem{Han65}
William Hanf.
\newblock Model-theoretic methods in the study of elementary logic.
\newblock In J.W. Addison, L.~Henkin, and A.~Tarski, editors, {\em The Theory
  of Models}, pages 132--145. North Holland, 1965.

\bibitem{DBLP:journals/corr/HarwathHS15}
Frederik Harwath, Lucas Heimberg, and Nicole Schweikardt.
\newblock Preservation and decomposition theorems for bounded degree
  structures.
\newblock {\em Logical Methods in Computer Science}, 11(4), 2015.
\newblock Conference version appeared in Proceedings of the {CSL-LICS} 2014.

\bibitem{HKS13-LICS}
Lucas Heimberg, Dietrich Kuske, and Nicole Schweikardt.
\newblock An optimal {G}aifman normal form construction for structures of
  bounded degree.
\newblock In {\em {Proceedings of the 28th Annual {ACM/IEEE} Symposium on Logic
  in Computer Science, (LICS 2013)}}, pages 63--72, 2013.

\bibitem{HKS16-LICS}
Lucas Heimberg, Dietrich Kuske, and Nicole Schweikardt.
\newblock Hanf normal form for first-order logic with unary counting
  quantifiers.
\newblock In {\em {Proceedings of the 31th Annual {ACM/IEEE} Symposium on Logic
  in Computer Science, (LICS 2016)}}, pages 63--72, 2016.

\bibitem{KazanaSegoufin-boundedDegree}
Wojciech Kazana and Luc Segoufin.
\newblock First-order query evaluation on structures of bounded degree.
\newblock {\em Logical Methods in Computer Science}, 7(2), 2011.

\bibitem{Kreutzer-AMT-Survey}
Stephan Kreutzer.
\newblock Algorithmic meta-theorems.
\newblock In {\em Finite and Algorithmic Model Theory}. Cambridge University
  Press, 2011.
\newblock {London Mathematical Society Lecture Notes, No.\ 379}.

\bibitem{KuskeSchweikardt2017}
Dietrich Kuske and Nicole Schweikardt.
\newblock First-order logic with counting.
\newblock In {\em 32nd Annual {ACM/IEEE} Symposium on Logic in Computer
  Science, {LICS} 2017, Reykjavik, Iceland, June 20-23, 2017}, pages 1--12.
  {IEEE} Computer Society, 2017.

\bibitem{Libkin-FMT}
Leonid Libkin.
\newblock {\em Elements of Finite Model Theory}.
\newblock Springer, 2004.

\bibitem{Lin08}
Steven Lindell.
\newblock A normal form for first-order logic over doubly-linked data
  structures.
\newblock {\em International Journal of Foundations of Computer Science},
  19(1):205--217, 2008.

\bibitem{DBLP:conf/lics/PanV06}
Guoqiang Pan and Moshe~Y. Vardi.
\newblock Fixed-parameter hierarchies inside {PSPACE}.
\newblock In {\em 21th {IEEE} Symposium on Logic in Computer Science {(LICS}
  2006), 12-15 August 2006, Seattle, WA, USA, Proceedings}, pages 27--36.
  {IEEE} Computer Society, 2006.

\bibitem{SchB99}
Thomas Schwentick and Klaus Barthelmann.
\newblock Local normal forms for first-order logic with applications to games
  and automata.
\newblock {\em Discrete Mathematics and Computer Science}, 3:109--124, 1999.

\bibitem{Seese}
Detlef Seese.
\newblock Linear time computable problems and first-order descriptions.
\newblock {\em Mathematical Structures in Computer Science}, 6(6):505--526,
  1996.

\bibitem{DBLP:conf/stacs/Segoufin14}
Luc Segoufin.
\newblock A glimpse on constant delay enumeration (invited talk).
\newblock In {\em {Proceedings of the 31st International Symposium on
  Theoretical Aspects of Computer Science (STACS 2014)}}, pages 13--27, 2014.

\bibitem{StockmeyerMeyer1973}
Larry~J. Stockmeyer and Albert~R. Meyer.
\newblock Word problems requiring exponential time (preliminary report).
\newblock In {\em Proceedings of the Fifth Annual ACM Symposium on Theory of
  Computing}, STOC '73, pages 1--9, New York, NY, USA, 1973. ACM.

\end{thebibliography}

\appendix

\section{Upper Bounds}
This part of the appendix provides detailed proofs for the time complexity of the algorithms described in Section~\ref{section:upper-bounds}. 
To simplify notation, we will denote by $S^{\sigma,d}_r(n)$ the class of all functions in
\begin{equation*}
  2^{(n \cdot \nu_d(r))^{\lin{\size{\sigma}}}}.
\end{equation*}
In particular, we will often make use of the fact that $\poly{S^{\sigma,d}_r(n)} \subseteq S^{\sigma,d}_r(n)$.

The following sections are numbered in the same way as the subsections of Section~\ref{section:upper-bounds}.

\subsection{(Positive) Positive Hanf Normal Form}
\label{appendix:upper-bounds:positive-hanf-normal-form}

\begin{proof}[Time complexity of the algorithm described in Lemma~\ref{lem:fo-to-hnfp}]
  We give upper bounds on the time needed to perform the three steps of the algorithm. 
  \begin{enumerate}
  \item According to Theorem~\ref{thm:fo-to-hnf}, the \HNF-formula $\phi^H(\ov{x})$ can be computed in time $S^{\sigma,d}_{4^q}(\size{\phi})$. 
  \item Step~2 can be performed in time $\poly{\size{\phi^H}}$, and thus also in time $S^{\sigma,d}_{4^q}(\size{\phi})$. 
  \item Each negated threshold-counting-formula $\neg\exists^{\geq k}y\,\sph_{\tau}(y)$ with $k\in\Npos$ can be replaced in time
    \lin{k \cdot \size{\phi^H}} and thus, since $k < \size{\phi^H}$, 
    in time $\poly{S^{\sigma,d}_{4^q}(\size{\phi})} \subseteq S^{\sigma,d}_{4^q}(\size{\phi})$.
    
    Consider now a negated type-formulae $\neg\sph_{\rho}(\ov{x}')$, where, for an $r \in \NN$ with $r\leq 4^q$, $\rho$ is a $d$-bounded $r$-type with $n'\in [n]$ centres.
    According to Lemma~\ref{lem:compute-representatives}, the set $\mathfrak{T}^{\sigma,d}_r(n')$ can be computed in time $S^{\sigma,d}_{4^q}(n) \leq S^{\sigma,d}_{4^q}(\size{\phi})$.

    Since the type-formula for each type in $\mathfrak{T}^{\sigma,d}_r(n')$ can be constructed in time $(n'\cdot\nu_d(r))^{\lin{\size{\sigma}}}$ for $n\cdot\nu_d(r)\geq 2$ and in time $\lin{\size{\sigma}}$ otherwise, $\neg\sph_{\rho}(\ov{x}')$ can be replaced in time
    $S^{\sigma,d}_{4^q}(n') \cdot S^{\sigma,d}_{4^q}(\size{\phi}) \leq  S^{\sigma,d}_{4^q}(\size{\phi})$.
    
    As $\phi^H$ contains at most $\size{\phi_H}$ many subformulae of the latter two shapes, we can conclude that Step~\ref{lem:fo-to-hnfp:step3} can altogether be performed in time $\poly{S^{\sigma,d}_{4^q}(\size{\phi})}\subseteq S^{\sigma,d}_{4^q}(\size{\phi})$. 
  \end{enumerate}
  Combining the running times for Step~\ref{lem:fo-to-hnfp:step1}, Step~\ref{lem:fo-to-hnfp:step2}, and Step~\ref{lem:fo-to-hnfp:step3}, we obtain that the algorithm altogether can be carried out in time
  \begin{equation*}
    S^{\sigma,d}_{4^q}(\size{\phi}) \ = \ 2^{(\size{\phi}\cdot\nu_d(4^q))^{\lin{\size{\sigma}}}}.
  \end{equation*}
  This completes the analysis of the time complexity of the algorithm of Lemma~\ref{lem:fo-to-hnfp}.
\end{proof}

\subsection{From Counting-Sentences to BSNF}
\label{appendix:upper-bounds:from-counting-sentences-to-BSNF}

\begin{proof}[Time complexity of the algorithm described in Lemma~\ref{lem:counting-sentences-to-BSNF}] 

Suppose that $\chi$ is of the shape \linebreak$\textsf{Q}y\,\sph_{\tau}(y)$ for $\textsf{Q}$ being either $\exists^{\geq k}$ or $\exists^{=k}$ for some $k\in\Npos$, or $\exists^{=0}$. In each of these cases, $\chi$ has size in $\lin{k^2 + \size{\sph_{\tau}}}$. 

We give upper bounds on the time required by the algorithm to construct the equivalent \BSNF-sentence, distinguishing on the shape of the given counting-sentence $\chi$. 
  \begin{itemize}
  \item If $\textsf{Q} = \exists^{\geq k}$, the described \BSNF-sentence can be constructed in time
    $\lin{ k^2 + \size{\sph_{\tau}}}$ and hence in time $\lin{\size{\chi}}$. 
  \item If $\textsf{Q} = \exists^{=k}$, the described \BSNF-sentence can also be constructed in time
    $\lin{ k^2 + \size{\sph_{\tau}}}$ and thus in time $\lin{\size{\chi}}$.
  \item It is obvious that for $\textsf{Q} = \exists^{=0}$, the described \BSNF-sentence can be constructed in time $\lin{\size{\chi}}$. 
  \end{itemize}
  Thus, in any case, the algorithm runs in time $\lin{\size{\chi}}$ to transform $\chi$ into an equivalent \BSNF-sentence.  
  This completes the analysis of the time complexity of the algorithm of Lemma~\ref{lem:counting-sentences-to-BSNF}.
\end{proof}

\subsection{From Sphere-Formulae to BSNF}
\label{appendix:upper-bounds:from-sphere-formulae-to-BSNF}

\begin{proof}[Time complexity of the algorithm described in Lemma~\ref{lem:type-formulae-to-BSNF}] 
  Observe that the universe of $\A$ consists of at most $n\cdot\nu_d(r)$ elements.
  The first step of the algorithm is to find the $k \in [n]$ connected components $\A_1,\dotsc,\A_k$ of $\A$ and their sets of centres $P_1,\dotsc,P_k$. Starting from the $n$ centres of~$\rho$, this can be done by at most $n\leq n\cdot\nu_d(r)$ runs of a standard graph traversal and thus in time $(n\cdot\nu_d(r))^{\lin{\size{\sigma}}}$ for $n\cdot\nu_d(r)\geq 2$. 

  Afterwards, for each $i \in [k]$, the following steps have to be carried out:
  \begin{enumerate}
  \item Compute the set $\mathfrak{T}_i$. According to Lemma~\ref{lem:compute-representatives}, it takes time in $2^{(\ell_i\cdot\nu_d(r_i))^{\lin{\size{\sigma}}}} = S^{\sigma,d}_{r_i}(\ell_i)$ to compute the set 
    $\mathfrak{T}^{\sigma,d}_{r_i}(\ell_i)$ 
    of representatives of the isomorphism classes of $d$-bounded $r_i$\hbox{-}types with $\ell_i$ centres over the signature $\sigma$.
    For each $r_i$-type $\tau = (B, b_1, \dotsc, b_{\ell_i})$ in~$\mathfrak{T}^{\sigma,d}_{r_i}(\ell_i)$, it takes time polynomial in respect to the size of its universe to decide whether $B \subseteq N_{r_i}^{\B}(b_1)$. Moreover, by Lemma~\ref{lem:types}, it takes time in $ (\ell_i\cdot\nu_d(r))^{\lin{\size{\sigma}}} \cdot 2^{\lin{\size{\sigma}(\ell_i\nu_d(r))^2}}$ and thus time in $S^{\sigma,d}_{r}(\ell_i) \leq S^{\sigma,d}_{r_i}(\ell_i)$ to decide whether $\N_r^{\B}(b_1,\dotsc,b_{\ell_i})\cong\rho_i$. We can conclude that the set $\mathfrak{T}_i$ can be computed in time $\poly{S^{\sigma,d}_{r_i}(\ell_i)}\subseteq S^{\sigma,d}_{r_i}(\ell_i)$. 
  \item For each $\tau \in \mathfrak{T}_i$, construct the formula $\gamma_{i,\tau}(\ov{x})$. 
    For $2 \leq N \isdef \ell_i \cdot \nu_d(r_i)$, this takes time in
    \begin{equation*}
      \lin{ N + N \cdot \size{\sigma}\cdot \log r_i + n \cdot \size{\sigma}\cdot \log r_i + n^2 } + N^{\lin{\size{\sigma}}}
      \quad \leq \quad N^{\lin{\size{\sigma}}}.
    \end{equation*}
    For this upper bound, recall that $\ell_i\leq n$ and that $2r{+}1 \leq r_i$. 
  \item Construct the formula $\gamma_i(\ov{x})$.
    Using the upper bound on the time needed to construct the formulae $\gamma_{i,\tau}$ for each $\tau\in\mathfrak{T}_i$ and the upper bound on the size of the set $\mathfrak{T}_i$, this takes time in
    ${(\ell_i\cdot\nu_d(r_i))}^{\lin{\size{\sigma}}} \cdot S_{r_i}^{\sigma,d}(\ell_i)$ for $\ell_i\cdot \nu_d(r_i)\geq 2$.
  \end{enumerate}
  After these preparations, the construction of the final \BSNF-formula $\alpha^B(\ov{x})$ takes time in
  \begin{equation*}
    S^{\sigma,d}_{n(2r+1)}(n) \ = \ 
    2^{(n\cdot\nu_d(n(2r{+}1)))^{\lin{\size{\sigma}}}}
    \ \subseteq \ 
    2^{\nu_d(n(2r{+}1))^{\lin{\size{\sigma}}}}
  \end{equation*}
  as $k \leq n$, $\ell_i \leq n$ and $r_i \leq n(2r{+}1)$ for each $i\in [k]$. 
  This completes the analysis of the time complexity of the algorithm of Lemma~\ref{lem:type-formulae-to-BSNF}.
\end{proof}

\subsection{Proof of Theorem~\ref{thm:bsnf-upper-bound}}
\label{appendix:upper-bounds:proof-of-main-theorem}

\begin{proof}[Time complexity of the algorithm described in Theorem~\ref{thm:bsnf-upper-bound}] 
  We analyse the time complexity of the steps of the algorithm.
  \begin{enumerate}
  \item The algorithm of Lemma~\ref{lem:fo-to-hnfp} computes the $\HNF_+$-formula $\phi^H_+(\ov{x})$ in time $S^{\sigma,d}_{4^q}(\size{\phi})$ and with locality radius at most $r \isdef 4^q$. 
  \item The algorithm of Lemma~\ref{lem:counting-sentences-to-BSNF} requires time linear in the size of the counting-sentence for each counting-sentence of $\phi^H_+(\ov{x})$. Thus, the counting-sentences in $\phi^H_+(\ov{x})$ can be replaced by equivalent \BSNF-sentences in time 
    $(S^{\sigma,d}_r(\size{\phi}))^2 \subseteq S^{\sigma,d}_{r}(\size{\phi})$. 
    \item The algorithm of Lemma~\ref{lem:type-formulae-to-BSNF} requires time in $S^{\sigma,d}_{n(2r{+}1)}(1)$ for each type-formula of $\phi^H_+(\ov{x})$. Thus, the type-formulae in $\phi^H_+(\ov{x})$ can be replaced by equivalent \BSNF-formulae in time
      $ S^{\sigma,d}_r(\size{\phi}) \cdot S^{\sigma,d}_{n(2r+1)}(1) \leq S^{\sigma,d}_{n(2r+1)}(\size{\phi})$.
    \item
      Observe that the positive Boolean combination $\psi(\ov{x})$ of \BSNF-formulae, obtained at the end of Step~(3), has size in $S^{\sigma,d}_{(n+1)(2r+1)}(\size{\phi})$.       The construction of the final \BSNF-formula is described by an induction over the shape of $\psi(\ov{x})$.
      It can be performed efficiently in three steps:
      \begin{enumerate}
      \item Rename the variables $y_1,\dotsc,y_m,z$ in the prefixes of each \BSNF-formula of the shape $\exists y_1 \cdots y_m \forall z \phi(\ov{x},y_1,\dotsc,y_m,z)$ such that no existentially quantified variable occurs in more than one of the \BSNF-formulae of $\psi(\ov{x})$ and such that all \BSNF-formulae of $\psi(\ov{x})$ use the same universally quantified variable $z$. 
      \item Use the list of renamed variables, obtained in Step~(a), to construct the prefix of the final \BSNF-formula according to the shape of the Boolean combination of $\psi(\ov{x})$. 
      \item Use the variable renamings, obtained in Step~(a), to construct the local formula of the final \BSNF-formula according to the shape of the Boolean combination of $\psi(\ov{x})$. 
      \end{enumerate}
      Each of these steps can be carried out in time polynomial in the size of $\psi(\ov{x})$, and thus in time $S^{\sigma,d}_{(n+1)(2r+1)}(\size{\phi})$. 
  \end{enumerate}
  Adding up the time for Step~(1), Step~(2), Step~(3), and Step~(4), the $d$-equivalent \BSNF-formula for the input formula $\phi(\ov{x})$ can be computed in time
  \begin{equation*}
    S^{\sigma,d}_{(n+1)(2r+1)}(\size{\phi}) \ = \ 2^{(\size{\phi}\cdot\nu_d((n+1)(2\cdot 4^q+1)))^{\lin{\size{\sigma}}}}.
  \end{equation*}
  This completes the analysis of the time complexity of the algorithm of Theorem~\ref{thm:bsnf-upper-bound}.
\end{proof}

\section{Lower Bounds}

\subsection{Proofs of Lemma~\ref{lem:iso_2-formula} and Lemma~\ref{lem:iso_d-formula}}
\label{appendix:lowerbounds}
\begin{proof}[Proof of Lemma~\ref{lem:iso_2-formula}]
In fact, Lemma~\ref{lem:iso_2-formula} is a special case of Lemma~\ref{lem:iso_d-formula}.
I.e., for each~$h\geq 1$, $\iso_h(x,y)$ can be chosen as the formula $\iso_{2,h}(x,y)$ provided by Lemma~\ref{lem:iso_d-formula}. The only modification necessary is to replace the relation symbol $E_1$ from the signature $\tau_1 = \set{ E_1, L}$ by the relation symbol $E$ from the signature $\set{E}$.
\end{proof}

In the following, $\FF_d$ denotes the class of all labeled and ordered trees of arity $d{-}1$. 
For the proof of Lemma~\ref{lem:iso_d-formula} we need the following lemma,
which provides, for distances $\ell\geq 0$, formulae of size logarithmically in $\ell$ that recognise co-reachable pairs of nodes up to distance~$\ell$ in forests from $\FF_d$. 
In a forest $\F\in\FF_d$, we call two nodes $b,b' \in F$  \emph{co-reachable with distance~$\le \ell$ from $a$ and $a'$} if, and only if, the following holds: 
\begin{quote}
There is a number $n \le \ell$ and sequences $a = c_0 , \ldots , c_n = b$ and $a'= c_0' , \ldots , c_n' = b$ of nodes in $\F$, such that for every $i \in [0,n]$, there is a $j \in [d-1]$, such that $(c_i,c_{i+1})$ and $(c_i',c_{i+1}')$ belong to the $j$-th successor relation $E_j$ of $\F$.
\end{quote}
 Intuitively, this means that $b$ and $b'$ can be reached from $a$ and $a'$, respectively, by paths using the same successor relations in the same order.

 In the context of binary trees, this lemma was formulated as Lemma 25 in \cite{FrickGrohe-FO-MSO-revisited}.
 
\begin{lemma}[cf. {\cite[Lemma 25]{FrickGrohe-FO-MSO-revisited}}] \label{lem:coreach}
Let $d \ge 2$. For each $\ell \ge 0$, there is an $\FO[\tau_{d-1}]$-formula $\coreach_{d,\ell}(x, y, x', y' )$ such that for every $\F \in \FF_d$ and all $a, b, a' , b' \in \F$, we have
\begin{align*}
	& F \models \coreach_{d,\ell}[a, b, a' , b' ] \\
\text{if, and only if, }\quad & b, b' \text{ are co-reachable from $a, a'$ with distance } \le \ell. \end{align*}
The size of $\coreach_{d,\ell}(x, y, x', y' )$ grows logarithmic in $\ell$, i.e., it is in $\lin{d \cdot \log \ell}$ for $\ell \geq 2$.
\end{lemma}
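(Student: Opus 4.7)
The plan is to construct $\coreach_{d,\ell}$ by induction on $\ell$, using the classical Stockmeyer-Meyer \emph{formula doubling} technique (also central to the lower-bound toolkit cited elsewhere in the paper). The semantic key is that $(y,y')$ is co-reachable from $(x,x')$ with distance $\le 2m$ if, and only if, there exists an intermediate pair $(z,z')$ such that $(z,z')$ is co-reachable from $(x,x')$ with distance $\le m$ \emph{and} $(y,y')$ is co-reachable from $(z,z')$ with distance $\le m$. A naive formalisation would use two copies of the recursive sub-formula, blowing its size up by a factor of $2$ per doubling step; a universal-quantifier-compression trick avoids this and keeps the growth additive.

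For the base case $\ell = 1$, we set
\begin{equation*}
  \coreach_{d,1}(x,y,x',y') \ \isdef \ (x{=}y \und x'{=}y') \ \oder \ \Oder_{j=1}^{d-1} \bigl( E_j(x,y) \und E_j(x',y') \bigr),
\end{equation*}
which has size $\lin{d}$ and correctly captures co-reachability with distance $\le 1$. For the inductive step, assuming $\coreach_{d,m}$ has been constructed, we define
\begin{equation*}
  \coreach_{d,2m}(x,y,x',y') \ \isdef \ \exists z \exists z' \forall u \forall v \forall u' \forall v' \Bigl( \bigl( (u{=}x \und v{=}z \und u'{=}x' \und v'{=}z') \ \oder \ (u{=}z \und v{=}y \und u'{=}z' \und v'{=}y') \bigr) \ \impl \ \coreach_{d,m}(u,v,u',v') \Bigr).
\end{equation*}
The $\forall$-guarded implication forces the placeholder tuple $(u,v,u',v')$ to range exactly over the two ``halves'' $(x,z,x',z')$ and $(z,y,z',y')$, so a single occurrence of $\coreach_{d,m}$ simultaneously checks both halves. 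Hence each doubling step adds only $\lin{1}$ to the formula size.

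A straightforward induction on $k$ shows that $\coreach_{d,2^k}$ has the claimed semantics, and by the additive growth its size is in $\lin{d + k}$. For an arbitrary $\ell \ge 2$, one performs $k = \lceil \log \ell \rceil$ doubling steps and, if $\ell$ is not a power of two, refines the recursion at one level by splitting $\ell = \lfloor \ell/2 \rfloor + \lceil \ell/2 \rceil$ (replacing the single $\coreach_{d,m}$ by two slightly different sub-formulae for that one step only, which costs an additive $\lin{d + \log \ell}$ term). The total size is then in $\lin{d + \log \ell} \subseteq \lin{d \cdot \log \ell}$ for $\ell \ge 2$, as required.

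The only genuinely delicate point is verifying that the quantifier-compression step really is equivalent to the natural ``conjunction of two halves'' formulation; this amounts to noting that the implication with a binary disjunction in its premise unfolds precisely into two conjoined instances of $\coreach_{d,m}$ with the intended argument substitutions. Once that is checked, correctness is a routine induction, and the size bound is immediate from counting symbols added at each level of the recursion.
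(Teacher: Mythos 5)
Your base case and your doubling step are exactly the paper's construction, including the universal-quantifier compression trick, and your verification of that trick is the right key point. The gap is in how you handle $\ell$ that are not powers of two. You propose to split $\ell = \lfloor \ell/2 \rfloor + \lceil \ell/2 \rceil$ and to replace the single compressed occurrence of $\coreach_{d,m}$ by ``two slightly different sub-formulae for that one step only.'' This does not work as stated: first, one level is not enough in general, since $\lfloor \ell/2 \rfloor$ and $\lceil \ell/2 \rceil$ need not themselves be powers of two, so the asymmetric split recurs at every level where an odd value appears (e.g.\ $\ell = 11 \to \{5,6\} \to \{2,3\} \to \dots$); second, and more seriously, writing two \emph{different} recursive sub-formulae at a level is precisely what the compression trick exists to avoid -- each such level duplicates the entire written-out subformula beneath it, so the additive accounting breaks down and you would need a separate (and nontrivial) argument that the total size stays in $\lin{d \cdot \log \ell}$ rather than degenerating towards linear in $\ell$. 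Note also that a symmetric split using $\lceil \ell/2 \rceil$ on both halves is unsound for odd $\ell$, since it accepts paths of length $\ell+1$.

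The paper's fix is cleaner: keep a single recursive call at every level by defining, for odd arguments,
\begin{equation*}
  \coreach_{d,2m+1}(x,y,x',y') \ \isdef \ \exists z \exists z' \bigl( \coreach_{d,1}(x,z,x',z') \ \und \ \coreach_{d,2m}(z,y,z',y') \bigr),
\end{equation*}
where $\coreach_{d,1}$ is the \emph{constant-size} base formula (size $\lin{d}$), not a recursive call. Each level of the recursion then adds $\lin{d}$ symbols and makes exactly one recursive call, and since $\ell$ halves at each even step, there are $\lin{\log \ell}$ levels, giving the claimed $\lin{d \cdot \log \ell}$ bound. You should also add the trivial formula $\coreach_{d,0}(x,y,x',y') \isdef x{=}y \und x'{=}y'$, since the lemma is stated for all $\ell \ge 0$.
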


\begin{proof} The construction of the formula proceeds by an induction over $\ell$ and is adapted from the proof of \cite[Lemma 25]{FrickGrohe-FO-MSO-revisited}. For $\ell=0$, we let \[ \coreach_{d,0}(x,y,x',y') \ \isdef \  x=y \, \land \, x'=y' \] and for $\ell=1$, we define
  \[ \coreach_{d,1}(x,y,x',y') \ \isdef \quad  \coreach_{d,0}(x,y,x',y') \ \lor \ \bigvee\limits_{i=1}^{d-1} \bigl(E_i(x,y) \ \land \ E_i(x',y') \bigr) \]
  Then, for all $\ell \ge 1$, we let 
  \begin{align*}
    \coreach_{d,2\ell}(x,y,x',y') \ \isdef \quad   \exists z \exists z' \forall u \forall v \forall u' \forall v' \Big( \big( 
    ( u=x \, \land \, u'=x' \, \land \, v=z \, \land \, v'=z' 	) \\
    \lor
    ( u=z \, \land \, u'=z' \, \land \, v=y \, \land \, v'=y')
    \big)\\ \impl \coreach_{d,\ell}(u,v,u',v')
    \Big)
  \end{align*}
  and 
  \begin{align*}
    \coreach_{d,2\ell+1}(x,y,x',y') \ \isdef \quad  \exists z \exists z'\bigl( \coreach_{d,1}(x,z,x',z') \, \land \, \coreach_{d,2 \ell}(z,y,z',y') \bigr)
  \end{align*}
  The upper bound on the size of the formula follows from a straightforward induction on~$\ell$. 
\end{proof}

\begin{proof}[Proof of Lemma~\ref{lem:iso_d-formula}] 
Let $d \ge 2$ and recall that $\tau_{d-1}$ denotes the signature $\set{E_1,\ldots, E_{d-1}, L}$.

The task is to define, for each $h\geq 1$, a $\FO[\tau_{d-1}]$-formula $\iso_{d,h}(x, y)$, such that for every $\F \in \FF_{d,2^h}$ % of degree $\leq 2$
 and all roots $a,b$ of connected components $\T,\T'$ of $\F$, we have: \[  \F \models \iso_{d,h}[a,b] \quad\text{if, and only if,} \quad \T \cong \T'. \] 
 
\noindent Using the formula $\coreach_{d,\ell}$, defined in Lemma~\ref{lem:coreach}
, we let
\[ \iso_{d,h}(x,y) \ \isdef \ \forall x' \, \forall y' \Big(\coreach_{d,2^h}(x,x',y,y') \impl \big( L(x')\ \gdw \ L(y') \big)\Big).\]
It is easy to verify, that the formula ensures that all nodes $x'$ and $y'$ which are on corresponding positions in the trees below $x$ and $y$, respectively, are labeled in the same way. Together with the fact, that both trees below $x$ and $y$ are complete ordered and labeled $(d{-}1)$-trees with height $2^h$, it follows that the formula $\iso_{d,h}[a,b]$ ensures for two root nodes $a$ and $b$ of connected components $\T,\T'$ in a forest $\F$ from $\FF_{d,2^h}$ that $\T \cong \T'$.

Finally, by induction on the recursive definition of the formula $\coreach_{d,\ell}$ it is straightforward to verify, that there is a number $c_d \in \Npos$ such that for all $h \ge 1$, the formula $\iso_{d,h}$ has size $\le c_d \cdot h$. This completes the proof of Lemma~\ref{lem:iso_d-formula}.
\end{proof}

\end{document}